\documentclass[noinfoline,stslayout]{imsart}

\usepackage{natbib}
\RequirePackage[colorlinks,citecolor=blue,urlcolor=blue]{hyperref}
\usepackage{url}
\usepackage{amssymb, mathtools}
\usepackage{comment}
\usepackage{xspace}
\usepackage{graphicx}
\usepackage{lmodern}
\usepackage{float}
\usepackage{xspace}
\usepackage{relsize}
\usepackage{amsmath}
\usepackage{amsthm}
\usepackage{mathrsfs}
\usepackage{multirow}
\usepackage{algorithm}
\usepackage{algorithmic}
\usepackage{booktabs} 
\usepackage{subfig}
\usepackage{ulem}\normalem 


\newtheorem{prop}{Proposition}
\newtheorem{definition}{Definition}
\newtheorem{remark}{Remark}
\newtheorem{lemma}{Lemma}
\newtheorem{example}{Example}

\def\dvtx{\,:\,}

\newcommand{\R}{\mathbb{R}}
\newcommand{\Proba}{\mathbb{P}}

\def\d{\mathrm{d}}
\def\simind{\stackrel{\mathrm{ind}}{\sim}}
\def\simiid{\stackrel{\mathrm{i.i.d.}}{\sim}}

\newcommand{\Fcr}{\mathscr{F}}

\newcommand{\X}{\mathbb{X}}

\newcommand{\E}{\mathbb{E}}

\newcommand{\edr}{\mathrm{e}}
\newcommand{\ddr}{\mathrm{d}}
\newcommand{\Mc}{\mathscr{M}}
\newcommand{\tms}{\tilde{\mu}^{*}}

\newcommand{\bm}{\boldsymbol{m}}

\newcommand{\bJ}{\boldsymbol{J}}


\newcommand{\CRM}{CRM\xspace}

\newcommand{\cnkpar}{\binom{n}{k_1\cdots k_n}}
\newcommand{\cnkfact}{\frac{n!}{k_1!\ldots k_n!}}

\newcommand{\FK}{Ferguson \& Klass\xspace}
\newcommand{\FKa}{Ferguson \& Klass algorithm\xspace}

\newcommand{\BeP}{\ensuremath{\text{BeP}}\xspace}
\newcommand{\NGG}{normalized generalized gamma process\xspace}
\newcommand{\GG}{generalized gamma process\xspace}
\newcommand{\SBP}{stable-beta process\xspace}
\newcommand{\IBP}{Indian buffet process\xspace}
\newcommand{\IG}{inverse-Gaussian process\xspace}
\newcommand{\IGau}{inverse-Gaussian\xspace}
\newcommand{\NMC}{N_{\text{\tiny{FK}}}}
\newcommand{\EFK}{\mathbb{E}_{\text{\tiny{FK}}}}

\begin{document}

\begin{frontmatter}

\title{A moment-matching Ferguson \& Klass algorithm\protect\footnote{Julyan Arbel, Collegio Carlo Alberto, Via Real Collegio, 30, 10024  Moncalieri, Italy, \href{mailto:julyan.arbel@carloalberto.org}{julyan.arbel@carloalberto.org}\\ 
Igor Pr\"unster, Department of Decision Sciences, BIDSA and IGIER, Bocconi University, via Roentgen 1, 20136 Milan, Italy, \href{mailto:igor@unibocconi.it}{igor@unibocconi.it}. \\Research supported by the European Research Council (ERC) through StG ``N-BNP'' 306406. }
}

\author{Julyan Arbel$^{1,2}$ \and Igor Pr\"unster$^{1}$}
%
\affiliation{$^{1}$Bocconi University, Milan, Italy}
\affiliation{$^{2}$Collegio Carlo Alberto, Moncalieri, Italy}

\maketitle

\begin{abstract}
Completely random measures (CRM) represent the key building block of a wide variety of popular stochastic models and play a pivotal role in modern Bayesian Nonparametrics. A popular representation of CRMs as a random series with decreasing jumps is due to \cite{ferguson1972representation}. This can immediately be turned into an algorithm for sampling realizations of CRMs or more elaborate models involving transformed CRMs. However, concrete implementation requires to truncate the random series at some threshold resulting in an approximation error. The goal of this paper is to quantify the quality of the approximation by a moment-matching criterion, which consists in evaluating a measure of discrepancy between actual moments and moments based on the simulation output. Seen as a function of the truncation level, the methodology can be used to determine the truncation level needed to reach a certain level of precision. The resulting moment-matching \FK algorithm is then implemented and illustrated on several popular Bayesian nonparametric models.
\end{abstract}

\begin{keyword}
\kwd{Bayesian Nonparametrics}
\kwd{Completely random measures }
\kwd{Ferguson \& Klass algorithm }
\kwd{Moment-matching }
\kwd{Normalized random measures }
\kwd{Posterior sampling}
\end{keyword}

\end{frontmatter}

\section{Introduction}\label{sec:introduction}
Independent increment processes or, more generally, completely random measures (CRMs) are ubiquitous in modern stochastic modeling and inference. They form the basic building block of countless popular models in, e.g., Finance, Biology, Reliability, Survival Analysis. Within Bayesian nonparametric statistics they play a pivotal role. The Dirichlet process, the cornerstone of the discipline introduced in \cite{ferguson1973bayesian}, can be obtained as normalization or exponentiation of suitable CRMs \cite[see][]{Ferguson1974}.
 Moreover, as shown in \cite{lijoi2010models}, CRMs can be seen as the unifying concept of a wide variety of Bayesian nonparametric models. See also  \cite{Jordan2010hierarchical}. The concrete implementation of models based on CRMs often requires to simulate their realizations. Given they are discrete infinite objects, $\sum_{i \geq 1} J_i \delta_{Z_i}$, some kind of truncation is required, producing an approximation error $\sum_{i \geq M+1} J_i \delta_{Z_i}$. Among the various representations useful for simulating realizations of CRMs the method  due to \cite{ferguson1972representation} and popularized by \cite{Walker2000representations} stands out in that, for each realization, the weights $J_i$'s are sampled in decreasing order. This clearly implies that for a given truncation level $M$ the approximation error over the whole sample space is minimized. The appealing feature of decreasing jumps has lead to a huge literature exploiting the Ferguson \& Klass algorithm. Limiting ourselves to recall contributions within Bayesian Nonparametrics we mention, among others,
\citet{argiento2015blocked,argiento2015priori,barrios2013modeling,DeBlasi2010class,epifani2003exponential,Griffin2011posterior,griffin2016adaptive,Nieto-Barajas2002markov,nieto2004normalized,Nieto-Barajas2004bayesian,nieto2009sensitivity,Nieto-Barajas2014bayesian}.
General references dealing with the simulation of L\'evy processes include \citet{rosinski2001series} and \citet{tankov2008financial}, who review the \FKa and the compound Poisson process approximation to a L\'evy process.

However, the assessment of the quality of the approximation due to the truncation for general CRMs is limited to some heuristic criteria. For instance, \citet{barrios2013modeling} implement the \FKa for mixture models by using the so called \textit{relative error} index. The corresponding stopping rule prescribes to truncate when the relative size of an additional jump is below a pre-specified fraction of the sum of sampled jumps. The inherent drawbacks of such a procedure and related heuristic threshold-type procedures employed in the several of the above references is two-fold. On the one hand the threshold is clearly arbitrary without quantifying the total mass of the ignored jumps. On the other hand the total mass of the jumps beyond the threshold, i.e. the approximation error, can be very different for different CRMs or, even, for the same CRM with different parameter values; this implies that the same threshold can produce very different approximation errors in different situations. Starting from similar concerns about the quality of the approximation, the recent paper by \citet{griffin2016adaptive} adopts an algorithmic approach and proposes an adaptive truncation sampler based on sequential Monte Carlo for infinite mixture models based on normalized random measures and on stick-breaking priors. The measure of discrepancy that is used in order to assess the convergence of the sampler is based on the effective sample size (ESS) calculated over the set of particles: the algorithm is run until the absolute value of the difference between two consecutive ESS gets under a pre-specified threshold. Also motivated by the same concerns, \citet{argiento2015blocked,argiento2015priori} adopt an interesting approach to circumvent the problem of truncation by changing the model in the sense of replacing the CRM part of their model with a Poisson process approximation, which having an (almost surely) finite number of jumps can be sampled exactly. However, this leaves the question of the determination of the quality of approximation for truncated CRMs open.
Another line of research, originated by \citet{ishwaran2001gibbs}, is dedicated to validating the trajectories from the point of view of the marginal density of the observations in mixture models. In this context, the quality of the approximation is measured by the $L_1$ distance between the marginal densities under truncated and non-truncated priors. Recent interesting contributions in this direction include bounds for a Ferguson \& Klass representation of the beta process \citep{doshi2009variational} and  bounds for the beta process, the Dirichlet process as well as for arbitrary CRMs in a \textit{size biased representation} \citep{paisley2012stick,brod}.

This paper faces the problem by a simple yet effective idea. In contrast to the above strategies, our approach takes all jumps of the CRMs into account and hence leads to select truncation levels in a principled way, which vary according to the type of CRM and its parameters. The idea is as follows: given moments of CRMs are simple to compute, one can quantify the quality of the approximation by evaluating some measure of discrepancy between the actual moments of the CRM at issue (which involve all its jumps) and the ``empirical'' moments, i.e. the moments computed based on the truncated sampled realizations of the CRM. By imposing such a measure of discrepancy not to exceed a given threshold and selecting the truncation level $M$ large enough to achieve the desired bound, one then obtains a set of ``validated'' realizations of the CRM, or, in other terms, satisfying a moment-matching criterion. An important point to stress is that our validation criterion is all-purpose in spirit since it aims at validating the CRM samples themselves rather than samples of a transformation of the CRM. Clearly the latter type of validation would be ad hoc, since it would depend on the specific model. For instance, with the very same set of moment-matching realizations of a gamma process, one could obtain a set of realizations of the Dirichlet process via normalization and a set gamma mixture hazards by combination with a suitable kernel. Moreover, given moments of transformed CRMs are typically challenging to derive, a moment-matching strategy would not be possible in most cases. Hence, while the quantification of the approximation error does not automatically translate to transformed CRMs, one can still be confident that the moment-matching output at the CRM level produces good approximations. That this is indeed the case is explicitly shown in some practical examples both for prior and posterior quantities in Section \ref{sec:BNP}.

The outline of the paper is as follows. In Sections \ref{sec:CRM_properties}-\ref{sec:moments} we recall the main properties of CRMs and provide expressions for their moments. In Sections \ref{sec:FK}-\ref{sec:moment_match} we describe the Ferguson \& Klass algorithm and introduce the measure of discrepancy between moments used to quantify the approximation error due to truncation. Section \ref{sec:BNP} illustrates the moment-matching Ferguson \& Klass algorithm for some popular CRMs and CRM-based Bayesian nonparametric models, namely normalized CRMs and the beta-stable Indian buffet process. Some probabilistic results, discussed in Section \ref{sec:FK}, are given in the Appendix.

\section{Completely random measures}\label{sec:CRM}

\subsection{Definition and main properties}\label{sec:CRM_properties}

Let $\Mc_\X$ be the set of boundedly finite measures on $\X$, which means that if $\mu \in \Mc_\X$ then $\mu(A)<\infty$ for any bounded set $A$. $\X$ is assumed to be a complete and separable metric space and both $\X$ and $\Mc_\X$ are equipped with the corresponding Borel $\sigma$-algebras. See \citet{daley2008introduction} for details.

\begin{definition}
A random element $\tilde\mu$, defined on $(\Omega,\Fcr,\mathbb{P})$ and taking values in $\Mc_\X$, is called a \emph{completely random measure} (CRM) if, for any collection of pairwise disjoint sets $A_1,\ldots,A_n$ in $\X$, the random variables $\tilde\mu(A_1),\ldots,\tilde\mu(A_n)$ are mutually independent.
\end{definition}

An important feature is that a CRM $\tilde \mu$ selects (almost surely) discrete measures and hence can be represented as
    \begin{equation} \label{eq:CRM_discrete}
    \tilde\mu=\sum_{i\ge1}J_i \delta_{Z_i}
    \end{equation}
where the jumps $J_i$'s and locations $Z_i$'s are random and independent. In \eqref{eq:CRM_discrete} and throughout we assume there are no fixed points of discontinuity a priori. The main technical tool for dealing with CRMs is given by their Laplace transform, which admits a simple structural form known as L\'evy--Khintchine representation. In fact, the Laplace transform of $\tilde\mu(A)$, for any $A$ in $\X$, is given by
    \begin{equation}\label{eq:lapl_transform}
    L_A(u) = \E \bigl[\edr^{-\lambda\tilde\mu(A)} \bigr] =\exp \biggl\{- \int_{\R^+\times A} \bigl[1-\edr^{- \lambda v } \bigr] \nu(
    \ddr v,\ddr x) \biggr\}
    \end{equation}
for any $\lambda>0$. The measure $\nu$ is known as \emph{L\'evy intensity} and uniquely characterizes $\tilde \mu$. In particular, there corresponds a unique CRM $\tilde \mu$ to any measure $\nu$ on $\R^+\times\X$ satisfying the integrability condition
    \begin{equation}
    \label{eq:levy_intensity}
    \int_B\int_{\R^+}\min\{v,1\} \nu(\ddr v,\ddr x)<\infty
    \end{equation}
for any bounded $B$ in $\X$. From an operational point of view this is extremely useful, since a single measure $\nu$ encodes all the information about the jumps $J_i$'s and the locations $Z_i$'s. The measure $\nu$ will be conveniently rewritten as
\begin{equation}
\label{pim} \nu(\ddr v,\ddr x)=\rho(\ddr v| x) \alpha(\ddr x),
\end{equation}
where $\rho$ is a transition kernel on $\R^+ \times\X$ controlling the jump intensity and $\alpha$ is a measure on $\X$ determining the
locations of the jumps. If $\rho$ does not depend on $x$, the CRM is said homogeneous, otherwise it is non-homogeneous.

We now introduce two popular examples of CRMs that we will serve as illustrations throughout the paper.

\begin{example}\label{ex:GG}
The \textit{\GG} introduced by \citet{brix1999generalized} is characterized by a L\'evy intensity of the form
\begin{equation}
\label{eq:ngg} \nu(\d v, \d x) =\frac{\edr^{-\theta v}}{\Gamma(1-\gamma)  v^{1+\gamma}}\,\d v\, \alpha(\d x),
\end{equation}
whose parameters $\theta\geq0$ and $\gamma\in[0,1)$ are such that at least one of them is strictly positive.
Notable special cases are: (i) the \emph{gamma} CRM which is obtained by setting $\gamma=0$; (ii)~the \IGau CRM, which arises by fixing $\gamma=0.5$; (iii) the \emph{stable} CRM which corresponds to $\theta=0$. Moreover, such a CRM stands out for its analytical tractability. In the following we work with $\theta=1$, a choice which excludes the stable CRM. This is justified in our setting because the moments of the stable process do not exist. See Remark \ref{rem:stable}.
\end{example}

\begin{example}\label{ex:SBP}
The \textit{stable-beta process}, or \textit{three-parameter beta process}, was defined by \citet{teh2009indian} as an extension of the beta process
\citep{hjort1990nonparametric}. Its jump sizes are upper-bounded by $1$ and its L\'evy intensity on $[0,1]\times\X$ is given by
    \begin{equation}
    \label{eq:stable_beta_intensity}
    \nu(\d v, \d x) =\frac{\Gamma(c+1)}{\Gamma(1-\sigma)\Gamma(c+\sigma)} v^{-\sigma-1}(1-v)^{c+\sigma-1}\d v\, \alpha(\d x),
    \end{equation}
where $\sigma\in[0,1)$ is termed \textit{discount parameter} and $c>-\sigma$ \textit{concentration parameter}. When $\sigma=0$, the \SBP reduces to the beta CRM of \cite{hjort1990nonparametric}. Moreover, if $c=1-\sigma$, it boils down to a stable CRM where the jumps larger than 1 are discarded.
\end{example}

\subsection{Moments of a CRM\label{sec:moments}}

For any measurable set $A$ of $\X$, the $n$-th (raw) moment of $\tilde \mu(A)$ is defined by
$$m_n(A) = \E\big[\tilde \mu^n (A)\big].$$ In the sequel the multinomial coefficient is denoted by $\cnkpar = \cnkfact$.
In the next proposition we collect known results about moments of CRMs which are crucial for our methodology.
\begin{prop}\label{prop:crm_moments}
Let $\tilde \mu$ be a CRM with L\'evy intensity $\nu(\d v, \d x)$. Then the $i$-th cumulant of $\tilde \mu(A)$, denoted by $\kappa_i(A)$, is given by
$$\kappa_i(A)=\int_{\R^+\times A}v^i\nu(\d v, \d x),$$
which, in the homogeneous case $\nu(\d v, \d x) = \rho(\d v)\alpha(\d x)$, simplifies to
$$\kappa_i(A) = \alpha(A)\int_0^\infty v^i\rho(\d v).$$
The $n$-th moment of $\tilde\mu(A)$ is given by
\begin{equation*}
m_n(A) = \sum_{(*)} {\scriptstyle\cnkpar} \prod_{i=1}^n \big(\kappa_i(A)/i!\big)^{k_i},
\end{equation*}
where the sum $(*)$ is over all $n$-tuples of nonnegative integers $(k_1,\ldots,k_n)$ satisfying the constraint $k_1+2k_2+\cdots+nk_n=n$.
\end{prop}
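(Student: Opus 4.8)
The plan is to exploit the Laplace transform representation in equation~\eqref{eq:lapl_transform}, which already encodes all distributional information about $\tilde\mu(A)$. The key observation is that the L\'evy--Khintchine exponent directly yields the cumulants. First I would write the cumulant generating function $K_A(\lambda) = \log \E[\edr^{-\lambda\tilde\mu(A)}] = -\int_{\R^+\times A}[1-\edr^{-\lambda v}]\,\nu(\ddr v,\ddr x)$, treating the parameter as a formal variable. Expanding $1-\edr^{-\lambda v} = -\sum_{i\geq 1}(-\lambda v)^i/i!$ and interchanging sum and integral (justified by the integrability condition~\eqref{eq:levy_intensity} together with the assumed finiteness of the moments, cf.\ Remark~\ref{rem:stable}), one obtains $K_A(\lambda) = \sum_{i\geq 1}\frac{(-\lambda)^i}{i!}\int_{\R^+\times A}v^i\,\nu(\ddr v,\ddr x)$. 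By the very definition of cumulants as the coefficients in the expansion $K_A(\lambda)=\sum_{i\geq 1}\frac{(-\lambda)^i}{i!}\kappa_i(A)$, matching terms yields $\kappa_i(A)=\int_{\R^+\times A}v^i\,\nu(\ddr v,\ddr x)$. The homogeneous simplification is then immediate upon substituting the factorization $\nu(\ddr v,\ddr x)=\rho(\ddr v)\alpha(\ddr x)$ from~\eqref{pim}, since the $v$-integral and the $x$-integral separate.

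For the second part, I would invoke the classical relationship expressing raw moments in terms of cumulants, which is most cleanly seen through the exponential formula relating the moment and cumulant generating functions. Writing the moment generating function as the exponential of the cumulant generating function, $\sum_{n\geq 0}\frac{(-\lambda)^n}{n!}m_n(A)=\exp\bigl\{\sum_{i\geq 1}\frac{(-\lambda)^i}{i!}\kappa_i(A)\bigr\}$, the plan is to expand the exponential and collect the coefficient of $(-\lambda)^n/n!$. Using the standard multinomial expansion of the exponential of a power series, the coefficient is indexed precisely by the nonnegative integer solutions $(k_1,\ldots,k_n)$ of $\sum_{i=1}^n i\,k_i=n$, since each factor $\kappa_i/i!$ enters with a power $k_i$ accounting for $i$ units toward the total order $n$. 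Tracking the combinatorial prefactor then produces exactly the multinomial coefficient $\cnkpar$ multiplying $\prod_{i=1}^n(\kappa_i(A)/i!)^{k_i}$, which is the Fa\`a di Bruno / complete Bell polynomial identity.

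The main obstacle I anticipate is the bookkeeping of the combinatorial coefficient in the second step, rather than anything analytic. Verifying that the coefficient of $(-\lambda)^n/n!$ in the exponential expands into the sum over partition-type tuples with the correct multinomial weight requires care: one must confirm that the factor $n!$ from the target normalization combines with the $1/k_i!$ from expanding $\exp$ and the $1/(i!)^{k_i}$ from the inner powers to yield $\cnkpar=\cnkfact$ as stated. This is a purely formal power-series computation and is in fact a restatement of the well-known formula relating moments to cumulants via complete Bell polynomials; since the proposition is billed as collecting \emph{known} results, I would either cite a standard reference for this moment--cumulant conversion or carry out the coefficient extraction explicitly in a short display, being careful that the interchange of summation and integration in the first step is legitimate under the standing assumption that all moments are finite.
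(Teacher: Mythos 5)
Your proposal is correct and follows essentially the same route as the paper: the paper's proof also works from the Laplace transform \eqref{eq:lapl_transform}, obtaining $m_n(A) = (-1)^n L_A^{(n)}(0)$ and applying Fa\`a di Bruno's formula to the exponential of the L\'evy exponent, which is exactly the moment--cumulant (complete Bell polynomial) conversion you carry out by coefficient extraction. The only difference is presentational---the paper differentiates $n$ times at zero whereas you expand formal power series and match coefficients---so your argument matches the intended proof.
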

A proof is given in the Appendix.

\smallskip

In the following we focus on (almost surely) finite CRMs i.e. $\tilde \mu(\X)<\infty$. This is motivated by the fact that most Bayesian nonparametric models, but also models in other application areas, involve finite CRMs. Hence, we assume that the measure $\alpha$ in \eqref{eq:levy_intensity} is finite i.e. $\alpha(\X)\coloneqq a \in (0,\infty)$. This is a sufficient condition for $\tilde \mu(\X)<\infty$ in the non-homogeneous case and also necessary in the homogeneous case \citep[see e.g.][]{rlp2003}. A common useful parametrization of $\alpha$ is then given as $a P^*$ with $P^*$ a probability measure and $a$ a finite constant. Note that, if $\tilde \mu(\X)=\infty$, one could still identify a bounded set of interest $A$ and the whole following analysis carries over by replacing $\tilde \mu(\X)$ with $\tilde \mu(A)$.

\smallskip

As we shall see in Section~\ref{sec:FK}, the key quantity for evaluating the truncation error is given by the random total mass of the CRM, $\tilde \mu(\X)$. Proposition~\ref{prop:crm_moments} shows how the moments $m_n=m_n(\X)$ can be obtained from the cumulants $\kappa_i=\kappa_i(\X)$ and, in particular, the relations between the first four moments and the cumulants are
    \begin{align*}
    m_1 &= \kappa_1, \,
    m_2= \kappa_1^2 + \kappa_2, \,
    m_3=  \kappa_1^3 + 3\kappa_1\kappa_2 + \kappa_3, \,
    m_4  = \kappa_1^4 + 6 \kappa_1^2\kappa_2 + 4\kappa_1\kappa_3 + 3\kappa_2^2 + \kappa_4.
    \end{align*}
With reference to the two examples considered in Section \ref{sec:CRM_properties}, in both cases the expected value of $\tilde \mu(\X)$ is $a$, which explains the typical terminology \emph{total mass parameter} attributed to $a$. For the generalized gamma CRM the variance is given by $\text{Var}(\tilde\mu(\X)) = a(1-\gamma)$, which shows how the parameter $\gamma$ affects the variability. Moreover, $\kappa_i = a(1-\gamma)_{(i-1)}$ with $x_{(k)}=x(x+1)\ldots(x+k-1)$ denoting the ascending factorial. As for the stable-beta CRM, we have $\text{Var}(\tilde\mu(\X)) = a\frac{1-\sigma}{c+1}$ with both discount and concentration parameter affecting the variability, and also $\kappa_i = a\frac{(1-\sigma)_{(i-1)}}{(1+c)_{(i-1)}}$. Table~\ref{tab:moments} summarizes the cumulants $\kappa_i$ and moments $m_n$ for the random total mass $\tilde \mu(\X)$  for the generalized gamma (assuming as in Example \ref{ex:GG} $\theta=1$), stable-beta CRMs and some of their special cases.

\renewcommand{\arraystretch}{1.8}

\begin{table}
\begin{tabular}{|c|c|ccccccc|}
\hline
\CRM & Cumulants & \multicolumn{7}{ c |}{Moments} \\ \hline
& $\kappa_i$ & $m_1$& & $m_2$ && $m_3$ && $m_4$ \\
\hline
G & $a(i-1)!$ & $a$ && $a_{(2)}$ && $a_{(3)}$ && $a_{(4)}$ \\
\hline
IG & $a{(1/2)_{(i-1)}}$ & ${a}$ && ${a^2}+\frac{1}{2}a$ && ${a^3}+\frac{3}{2}a^2 $ && ${a^4}+{3a^3}$ \\
&&&&&&$+\frac{3}{4}a$&& $+\frac{15}{4}a^2+\frac{15}{8}a$\\
\hline
GG & $a{(1-\gamma)_{(i-1)}}$ & ${a}$ && ${a^2}+{a{\scriptstyle(1-\gamma)}}$ && ${a^3}+3a^2{\scriptstyle(1-\gamma)}$ && ${a^4}+6a^3{\scriptstyle(1-\gamma)}$ \\
&&&&&&$+{a{\scriptstyle(1-\gamma)_{(2)}}}$ && $+a^2{\scriptstyle{(1-\gamma)(11-7\gamma)}}+{a{\scriptstyle(1-\gamma)_{(3)}}}$\\
\hline
B & $a\frac{(i-1)!}{(c+1)_{(i-1)}}$ & $a$ && $a^2+\frac{a}{c+1}$ && $a^3+\frac{3a^2}{c+1} $ & &$a^4+\frac{6a^3}{c+1} + \frac{8a^2}{(c+1)_{(2)}} $\\
&&&&&&$+ \frac{2a}{(c+1)_{(2)}}$ && $+\frac{3a^2}{(c+1)^2} + \frac{6a}{(c+1)_{(3)}}$\\
\hline
SB & $a\frac{(1-\sigma)_{(i-1)}}{(c+1)_{(i-1)}}$ & $a$ & &$a^2+a\frac{1-\sigma}{c+1}$ && $a^3+3a^2\frac{1-\sigma}{c+1} $ && $a^4+6a^3\frac{1-\sigma}{c+1} + 4a^2\frac{(1-\sigma)_{(2)}}{(c+1)_{(2)}}$ \\
&&&&&&$+ a\frac{(1-\sigma)_{(2)}}{(c+1)_{(2)}}$ && $+ 3a^2\frac{(1-\sigma)^2}{(c+1)^2}+a\frac{(1-\sigma)_{(3)}}{(c+1)_{(3)}}$\\
\hline
\end{tabular}
\caption{\label{tab:moments}
Cumulants and first four moments  of the random total mass $\tilde \mu(\X)$ for the gamma (G), \IGau (IG), generalized gamma (GG), beta (B) and stable-beta (SB) CRMs.}
\end{table}

\begin{remark} \label{rem:stable}
{\rm The \textit{stable} CRM, which can be derived from the generalized gamma CRM by setting $\theta=0$, does not admit moments. Hence, it cannot be included in our moment-matching methodology. However, the \textit{stable} CRM with jumps larger than $1$ discarded, derived from the \SBP by setting $c=1-\sigma$, has all moments. Moreover, even when working with the standard stable CRM, posterior quantities typically involve an exponential updating of the L\'evy intensity \citep[see][]{lijoi2010models}, which makes the corresponding moments finite. This then allows to apply the moment matching methodology to the posterior.}
\end{remark}

\subsection{\FK algorithm\label{sec:FK}}

For notational simplicity we present the \FK algorithm for the case $\X = \R$. However, note that it can be readily extended to more general Euclidean spaces \citep[see e.g.][]{orbanz2011unit}.
Given a CRM
\begin{equation}
\tilde \mu = \sum_{i=1}^\infty J_i \delta_{Z_i},
\label{eq:mu_FK}
\end{equation}
the \FK representation consists in expressing random jumps $J_i$ occurring at random locations $Z_i$ in terms of the underlying L\'evy intensity.

\noindent In particular, the random locations $Z_i$, conditional on the jump sizes $J_i$, are obtained from the distribution function $F_{Z_i|J_i}$ given by
\[
F_{Z_i|J_i}(s)=\frac{\nu(\ddr J_i , (-\infty,s])}{\nu(
\ddr J_i,\R)}.
\]
In the case of a homogeneous CRM with L\'evy intensity $\nu(\d v, \d x) =\rho(\d v)\, a P^*(\d x)$, the jumps are independent of the locations and, therefore $F_{Z_i|J_i}=F_{Z_i}$ implying that the locations are i.i.d. samples from $P^*$.

\noindent As far as the random jumps are concerned, the representation produces them in decreasing order, that is, $J_1\geq J_2\geq \cdots $. Indeed, they are obtained as $\xi_i=N(J_i)$, where $N(v)=\nu([v,\infty),\R)$ is a decreasing function, and $\xi_1,\xi_2,\ldots$ are jump times of a
standard Poisson process (PP) of unit rate i.e. $\xi_1,\xi_2-\xi_1,\ldots\simiid\text{Exp}(1)$.
Therefore, the $J_i$'s are obtained by solving the equations $\xi_i=N(J_i)$. In general, this is achieved by numerical integration, e.g., relying on quadrature methods \citep[see, e.g.][]{burden1993numerical}. For specific choices of the CRM, it is possible to make the equations explicit or at least straightforward to evaluate. For instance, if $\tilde \mu$ is a \GG (see Example \ref{ex:GG}), the function $N$ takes the form
\begin{equation}
\label{funcM}  N(v) = \frac{a}{\Gamma(1-\gamma)}\int_v^\infty
\edr^{- u}u^{-(1+\gamma)}\,\d u =  \frac{a}{\Gamma(1-\gamma)}\Gamma(v; -\gamma),
\end{equation}
with $\Gamma(\,\cdot\,;\,\cdot\,)$ indicating an incomplete gamma function. If $\tilde \mu$ is the stable-beta process, one has
\begin{equation}
\label{eq:funcM_BP}  N(v) =a\frac{\Gamma(c+1)}{\Gamma(1-\sigma)\Gamma(c+\sigma)}\int_v^1u^{-\sigma-1}(1-u)^{c+\sigma-1}\,\d u =
a\frac{\Gamma(c+1)}{\Gamma(1-\sigma)\Gamma(c+\sigma)}B(1-v;c+\sigma,-\sigma),
\end{equation}
where $B(\,\cdot\,;\,\cdot\,,\,\cdot\,)$ denotes the incomplete beta function.

\medskip

Hence, the \FK algorithm can be summarized as follows.\\[-1cm]
\begin{center}
\begin{minipage}[c]{6.3cm}
\begin{algorithm}[H]
\caption{\FKa\label{algo:FKa}}
  \begin{algorithmic}[1]
  \STATE Sample $\xi_i\sim\text{PP}$ for $i=1,\ldots,M$
  \STATE Define $J_i=N^{-1}(\xi_i)$ for $i=1,\ldots,M$
  \STATE Sample $Z_i\sim P^*$ for $i=1,\ldots,M$
  \STATE Approximate $\tilde\mu$ by $\sum_{i=1}^M J_i \delta_{Z_i} $
   \end{algorithmic}
\end{algorithm}
\end{minipage}
\end{center}

\medskip

Since it is impossible to sample an infinite number of jumps, approximate simulation of $\tilde \mu$ is in order. This becomes a question of determining the number $M$ of jumps to sample in~\eqref{eq:mu_FK} leading to the truncation
\begin{equation}\label{eq:mu_approx}
\tilde \mu \approx \tilde \mu_M = \sum_{i=1}^M J_i \delta_{Z_i},
\end{equation}
with approximation error in terms of the un-sampled jumps equal to $\sum_{i=M+1}^\infty J_i$. The \FK representation has the key advantage of generating the jumps in decreasing order implicitly minimizing such an approximation error. Then, the natural path to determining the truncation level $M$ would be the evaluation of the \FK tail sum
\begin{equation}\label{eq:tail_sum}
\sum_{i=M+1}^\infty N^{-1}(\xi_i).
\end{equation}

\citet[][Theorem A.1]{brix1999generalized} provided an upper bound for \eqref{eq:tail_sum} in the generalized gamma case. In Proposition \ref{prop:proba} of Appendix~\ref{sec:app_tail} we derive also an upper bound for the tail sum of the stable-beta process. However, both bounds are far from sharp and therefore of little practical use as highlighted in Appendix \ref{sec:app_tail}. This motivates the idea of looking for a different route and our proposal consists in the  moment-matching technique detailed in the next section.

\subsection{Moment-matching criterion}\label{sec:moment_match}

Our methodology for assessing the quality of approximation of the \FKa consists in comparing the actual distribution of the random total mass $\tilde \mu(\X)$ with its empirical counterpart, where by empirical distribution we mean the distribution obtained by the sampled trajectories, i.e. by replacing random quantities by Monte Carlo averages of their sampled trajectories. In particular, based on the fact that the first $K$ moments carry much information about a distribution, theoretical and empirical moments of $\tilde \mu(\X)$ are compared.

The infinite vector of jumps is denoted by $\bJ=(J_i)_{i=1}^\infty$ and a vector of jumps sampled by the \FKa by $\bJ^{(l)}=(J_1^{(l)},\ldots,J_M^{(l)})$. Here, $l=1,\ldots,\NMC$ stands for the $l$-th iteration of the algorithm, i.e. for the $l$-th sampled realization. We then approximate the expectation $\E$ of a statistic of the jumps, say $S(\bJ)$, by the following empirical counterpart, denoted by $\EFK$,
\begin{equation}\label{eq:approx_gen}
\E\big[S(\bJ)]\approx \EFK\big[S(\bJ)] \coloneqq \frac{1}{\NMC}\sum_{l=1}^{\NMC} S\big(\bJ^{(l)}\big).
\end{equation}
Note that there are two layers of approximation involved in \eqref{eq:approx_gen}: first, only a finite number of jumps $M$ is used; second, the actual expected value is estimated through an empirical average which typically conveys on Monte Carlo error. The latter is not the focus of the paper, so we take a large enough number of trajectories, $\NMC = 10^4$, in order to insure a limited Monte Carlo error of the order of $0.01$. We focus on the first approximation inherent to the \FKa .

More specifically, as far as moments are concerned, $\bm_K = (m_1,\ldots,m_K)$ denotes the first $K$ moments of the random total mass $\tilde \mu(\X)=\sum_{i=1}^\infty J_i$ provided in Section \ref{sec:moments} and $\hat \bm_{K} = (\hat m_1,\ldots,\hat m_K)$ indicates the first $K$ empirical moments given by
\begin{equation}\label{eq:approx_moments}
\hat m_n = \EFK\left[\left(\sum_{i=1}^M J_i\right)^n\right].
\end{equation}

As measure of discrepancy between theoretical and empirical moments, a natural choice is given by the mean squared error between the vectors of moments or, more precisely, between the $n$-th roots of theoretical and empirical moments
\begin{equation}\label{eq:MSE}
\ell = \ell(\bm_K,\hat \bm_K) = \bigg(\frac{1}{K}\sum_{n=1}^K\big(m_n^{1/n}-\hat m_n^{1/n}\big)^2\bigg)^{1/2}.
\end{equation}

When using the \FK representation for computing the empirical moments the index $\ell$ depends on the truncation level $M$ and we highlight such a dependence by using the notation $\ell_M$. Of great importance is also a related quantity, namely the number of jumps necessary for achieving a given level of precision, which essentially consists in inverting $\ell_M$ and is consequently denoted by $M(\ell)$.

The index of discrepancy \eqref{eq:MSE} clearly also depends on $K$, the number of moments used to compute it and $1/K$ in \eqref{eq:MSE} normalizes the indices in order to make them comparable as $K$ varies. A natural question is then about the sensitivity of \eqref{eq:MSE} w.r.t. $K$. It is  desirable for $\ell_M$ to capture fine variations between the theoretical and empirical distributions, which is assured for large $K$. In extensive simulation studies not reported here we noted that increasing $K$ in the range $\{1, \ldots, 10\}$ makes the index increase and then plateau and this holds for all processes and parameter specifications used in the paper. Recalling also the whole body of work by Pearson on eponymous curves, which shows that the knowledge of four moments suffices to cover a large number of known distributions, we adhere to his rule of thumb and choose $K=4$ in our analyses. On the one hand it is a good compromise between targeted precision of the approximation and speed of the algorithm. On the other hand it is straightforward to check the results as $K$ varies in specific applications; for the ones considered in the following sections the differences are negligible.

In the literature several heuristic indices based on the empirical jump sizes around the level of truncation have been discussed \citep[cf Remark 3 in][]{barrios2013modeling}. Here, in order to compare such procedures with our moment criterion, we consider the relative error index which is based on the jumps themselves. It is defined as the expected value of the relative error between two consecutive partial sums of jumps. Its empirical counterpart is denoted by $e_M$ and given by
\begin{equation}\label{eq:heuristic}
e_M = \EFK \bigg[\frac{J_M}{\sum_{i=1}^M J_i}\bigg].
\end{equation}

\section{Applications to Bayesian Nonparametrics} \label{sec:BNP}

In this section we concretely implement the proposed moment-matching \FKa to several Bayesian nonparametric models. The performance in terms of both a priori and a posteriori approximation is evaluated. A comparison of the quality of approximation resulting from using \eqref{eq:heuristic} as benchmark index is provided.

\subsection{A priori simulation study} \label{sec:prior}

We start by investigating the performance of the proposed moment-matching version of the \FKa w.r.t. the CRMs defined in Examples \ref{ex:GG} and \ref{ex:SBP}, namely the generalized gamma and stable-beta processes. Figure~\ref{fig:simulated_examples} displays the behaviour of both the moment-matching distance $\ell_M$ (left panel) and the relative jumps' size index $e_M$  (right panel) as the truncation level $M$ increases. The plots, from top to bottom, correspond to: the \GG with  varying $\gamma$ and $a=1$ fixed; the \IG with varying total mass $a$ (which is a \GG process with $\gamma=0.5$); the \SBP with varying discount parameter $\sigma$ and  $a=1$ fixed.

First consider the behaviour of the indices as the parameter specifications vary. It is apparent that, for any fixed truncation level $M$, the indices $\ell_M$ and $e_M$ increase as each of the parameters $a$, $\gamma$ or $\sigma$ increases.
For instance, roughly speaking, a total mass parameter $a$ corresponds to sampling trajectories defined on the interval $[0,a]$ \citep[see][]{rlp2003}, and a larger interval worsens the quality of approximation for any given truncation level. Also it is natural that $\gamma$ and $\sigma$ impact in similar way $\ell_M$ and $e_M$ given they stand for the ``stable'' part of the L\'evy intensity. See first and third rows of Figure~\ref{fig:simulated_examples}.

As far as the comparison between $\ell_M$ and $e_M$ is concerned, it is important to note that $e_M$  consistently downplays the error of approximation related to the truncation. This can be seen by comparing the two columns of Figure~\ref{fig:simulated_examples}. $\ell_M$ is significantly more conservative than $e_M$ for both the generalized gamma and the stable-beta processes, especially for increasing values of the parameters $\gamma$, $a$ or $\sigma$. This indicates quite a serious issue related to $e_M$ as a measure for the quality of approximation and one should be cautious when using it. In contrast, the moment-matching index $\ell_M$ matches more accurately the known behaviour of these processes as the parameters vary.

By reversing the viewpoint and looking at the truncation level $M(\ell)$ needed for achieving a certain error of approximation $\ell$ in terms of moment-match, the results become even more intuitive. We set $\ell = 0.1$ and computed $M(\ell)$ on a grid of size $20\times 20$ with equally-spaced points for the parameters $(a,\gamma)\in (0,2)\times (0,0.8)$ for the \GG and $(a,c)\in (0,2)\times (0,30)$ for the beta process. Figure~\ref{fig:persp} displays the corresponding plots. In general, it is interesting to note that a limited number of jumps is sufficient to achieve good precision levels. Analogously to Figure~\ref{fig:simulated_examples}, larger values of the parameters require a larger number of jumps to achieve a given precision level. In particular, when $\gamma>0.5$, one needs to sample a significantly larger number of jumps. For instance, in the \GG case, with $a=1$, the required number of jumps increases from $28$ to $53$ when passing from $\gamma=0.5$ to $\gamma=0.75$. It is worth noting that for the normalized version of the \GG, to be discussed in Section \ref{sec:nrmi} and quite popular in applications, the estimated value of $\gamma$ rarely exceeds $0.75$ in species sampling, whereas it is typically in the range $[0.2,0.4]$ in mixture modeling.

\begin{center}
\begin{figure}[ht!]
\includegraphics[width=.49\linewidth]{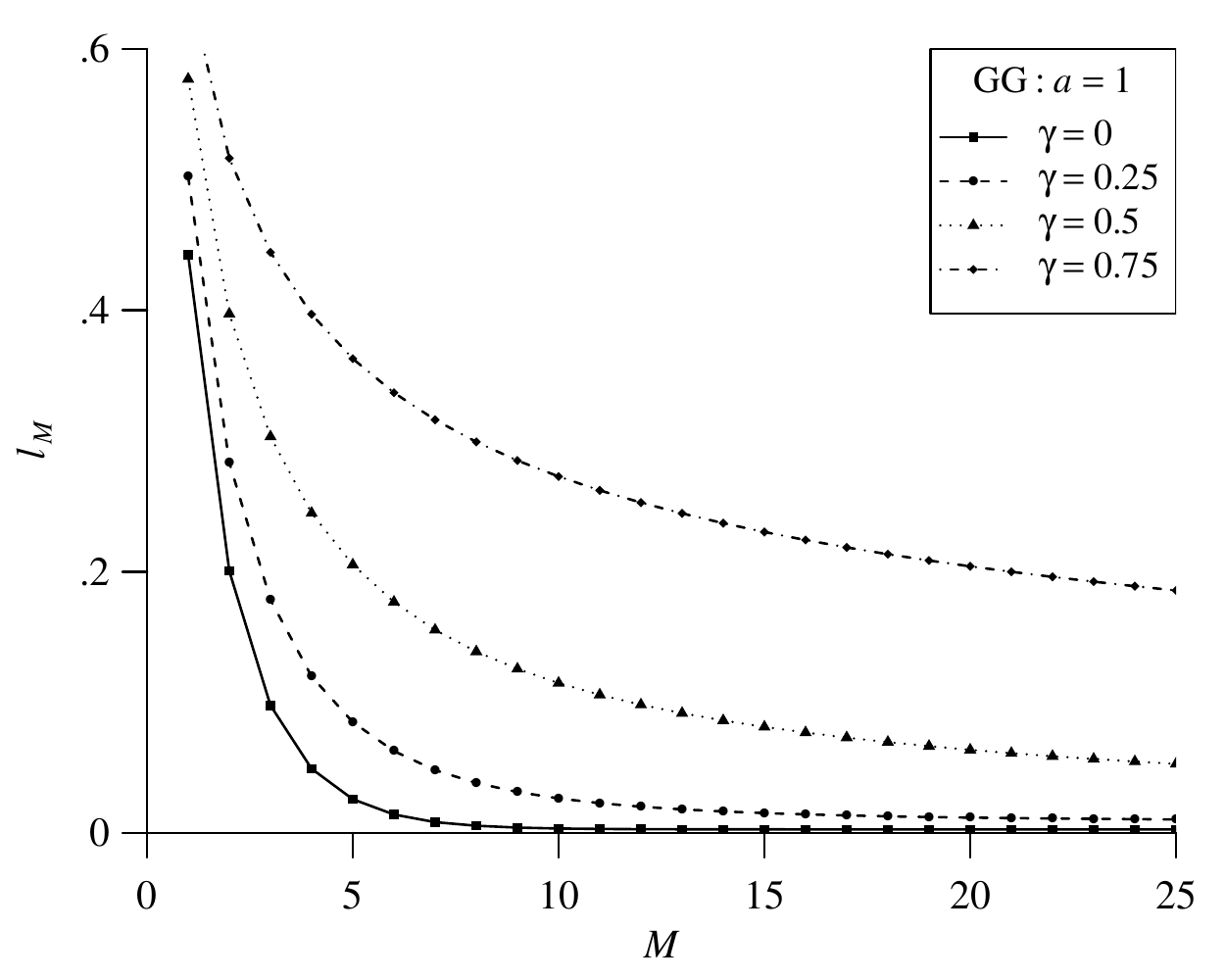}
\includegraphics[width=.49\linewidth]{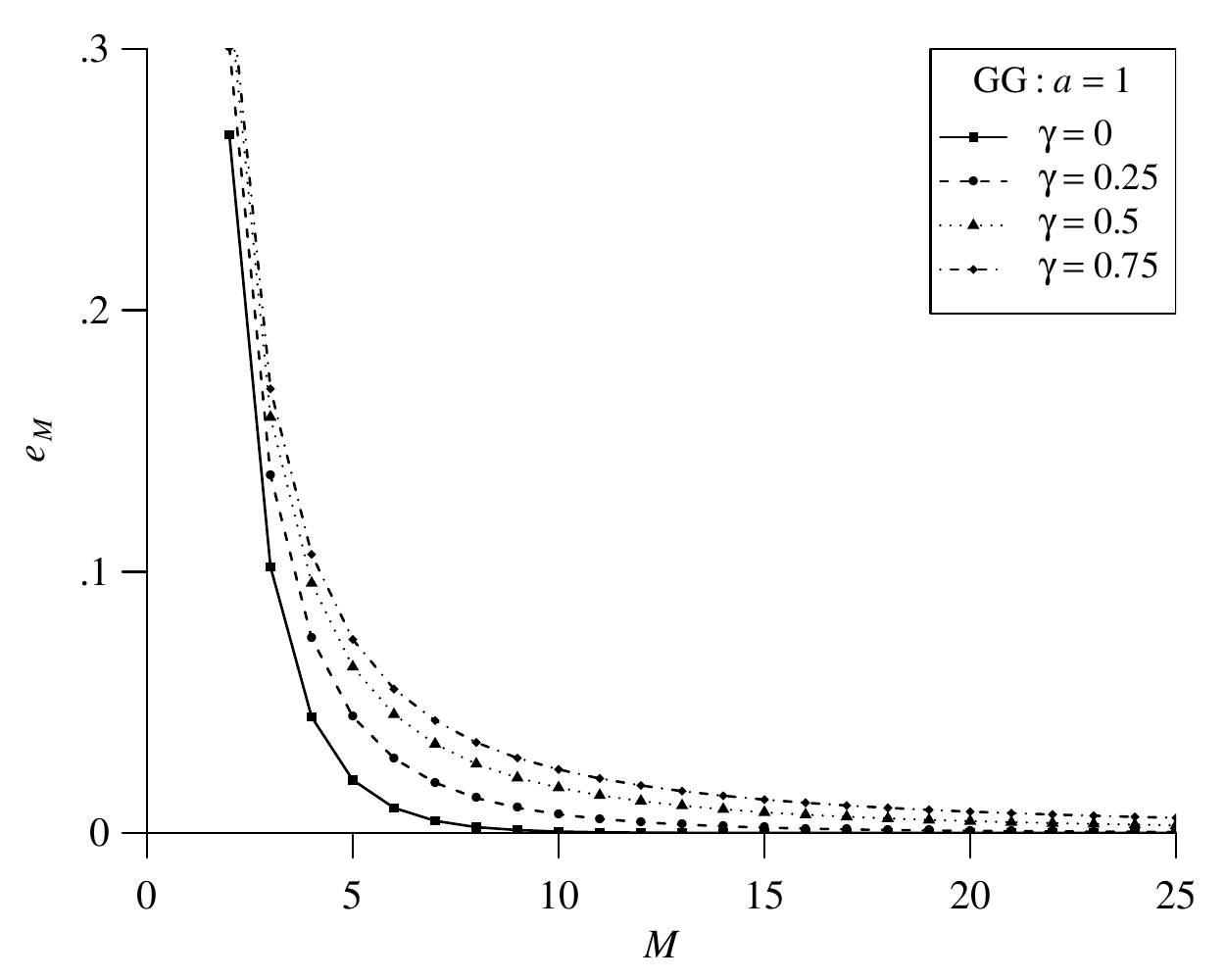}
\includegraphics[width=.49\linewidth]{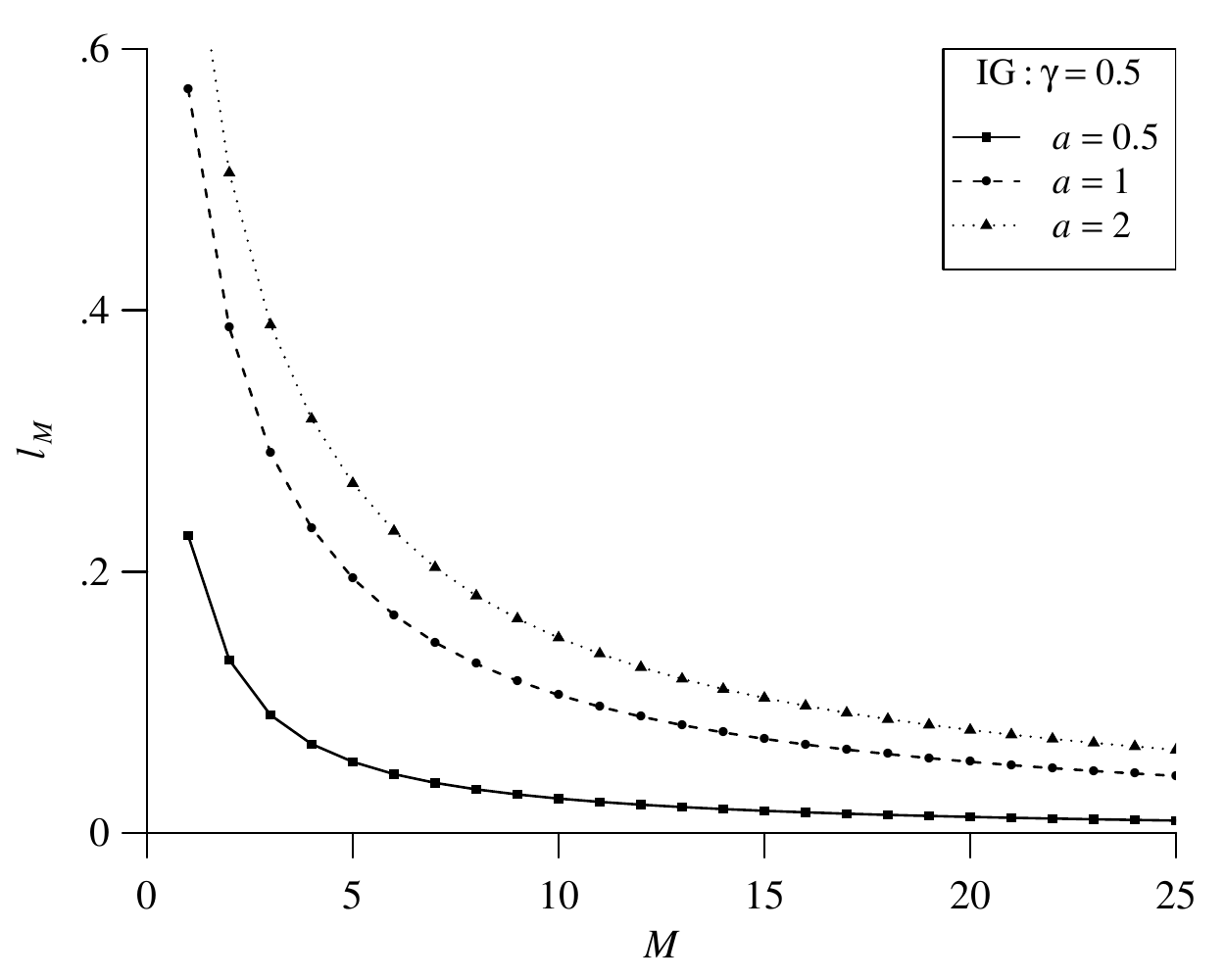}
\includegraphics[width=.49\linewidth]{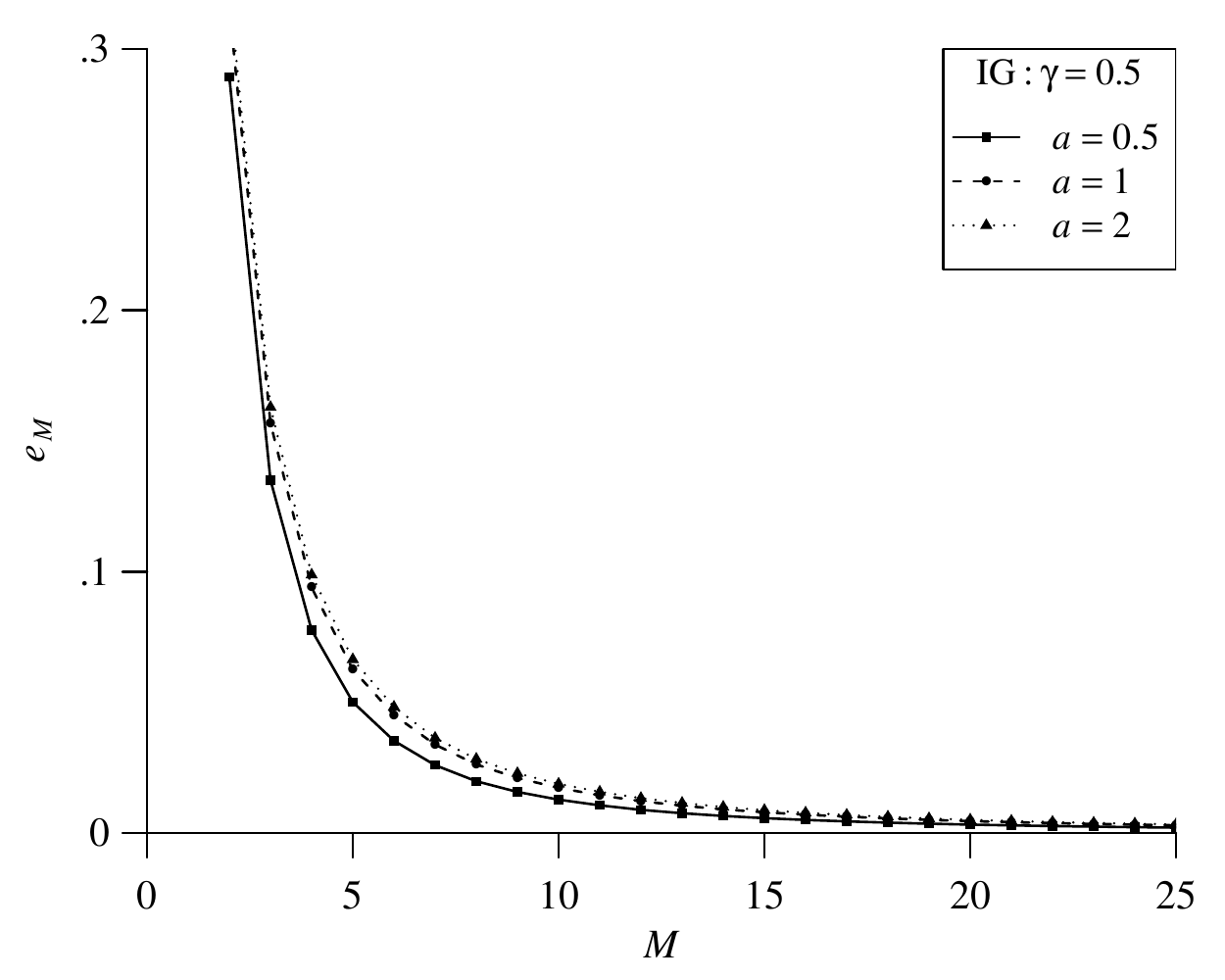}
\subfloat[$\ell_M$]{
\includegraphics[width=.49\linewidth]{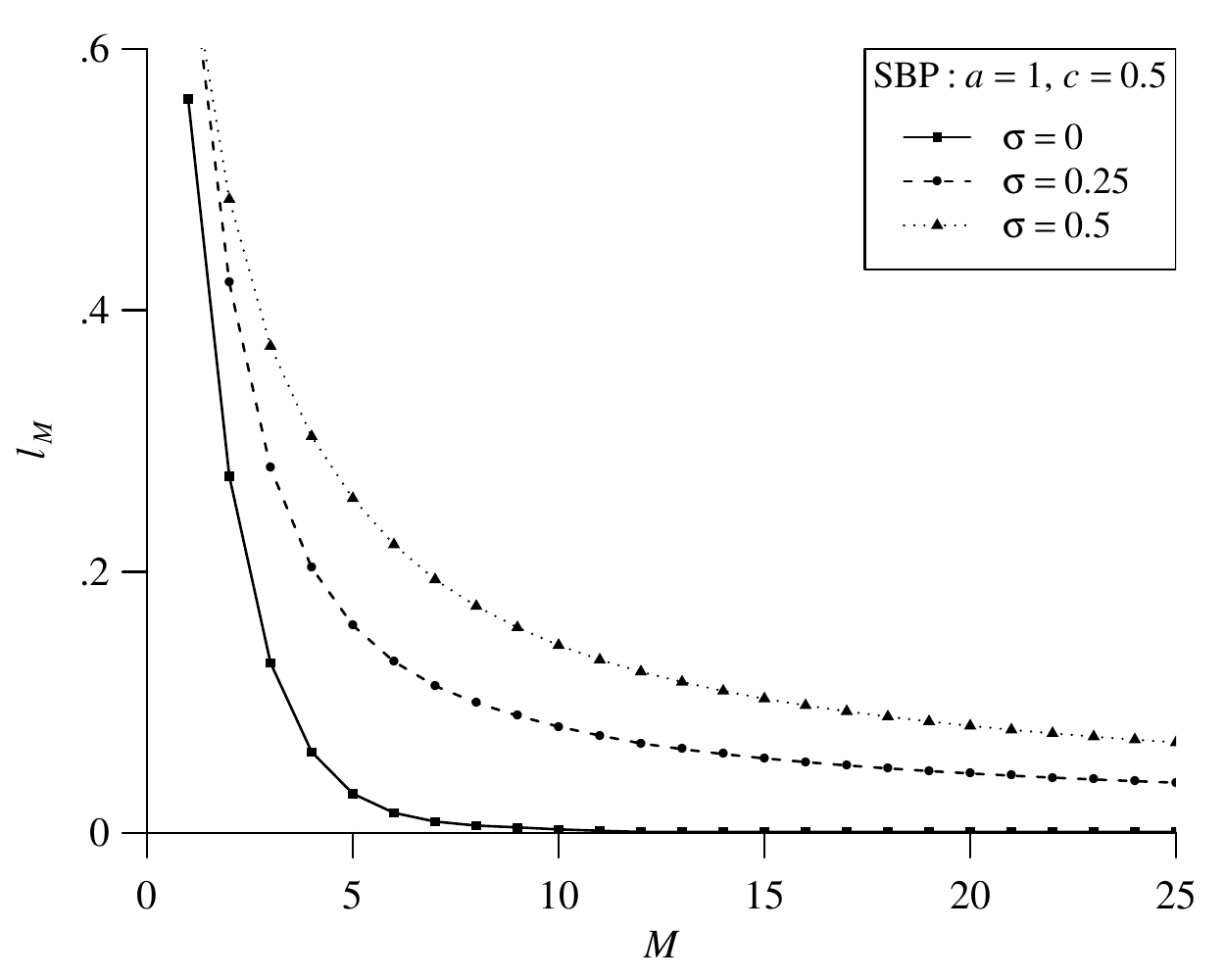}
}
\subfloat[$e_M$]{
\includegraphics[width=.49\linewidth]{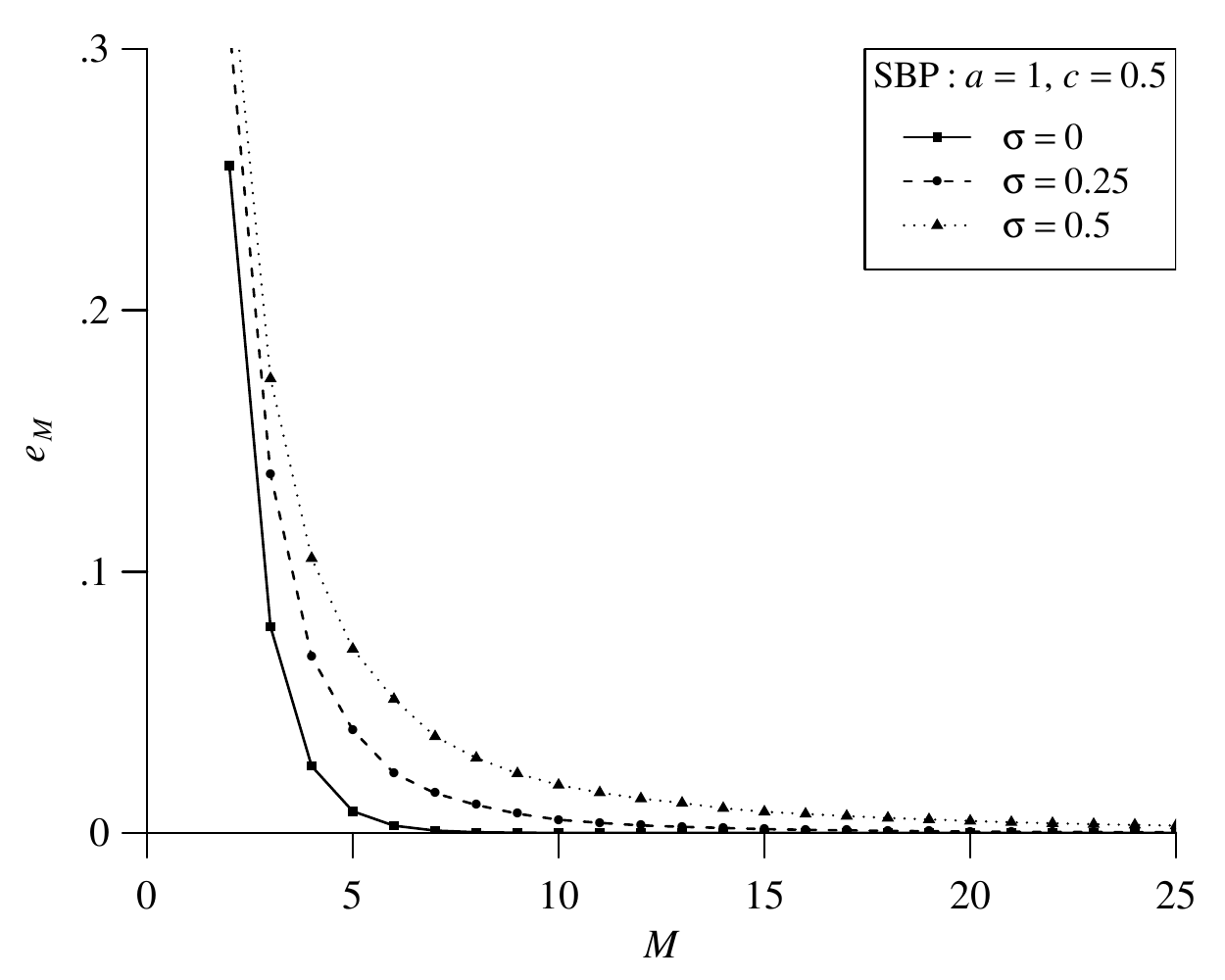}
}
\caption{Left panel: $\ell_M$ as $M$ varies; right panel: $e_M$ as $M$ varies.  Top row: \GG (GG) with  varying $\gamma$ and $a=1$ fixed; middle row: \IG (IG), $\gamma=0.5$, with varying total mass $a$; bottom row: \SBP (SBP) with $a=1$, $c=0.5$ fixed and varying discount parameter $\sigma$. The points are connected by straight lines only for visual simplification.}
\label{fig:simulated_examples}
\end{figure}
\end{center}
\begin{center}
\begin{figure}[ht!]
\subfloat[\GG]{
\includegraphics[width=.49\linewidth]{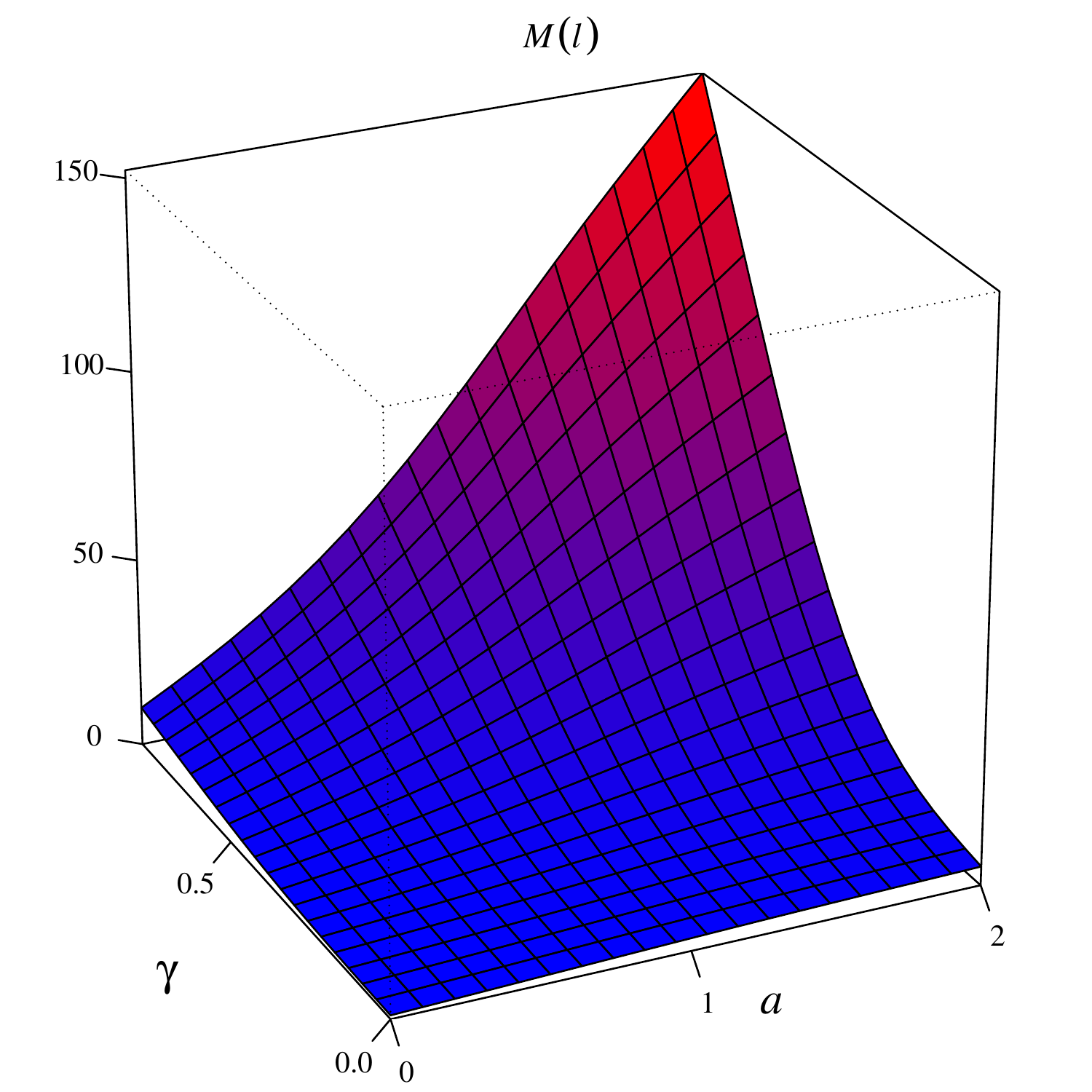}
}
\subfloat[beta process]{
\includegraphics[width=.49\linewidth]{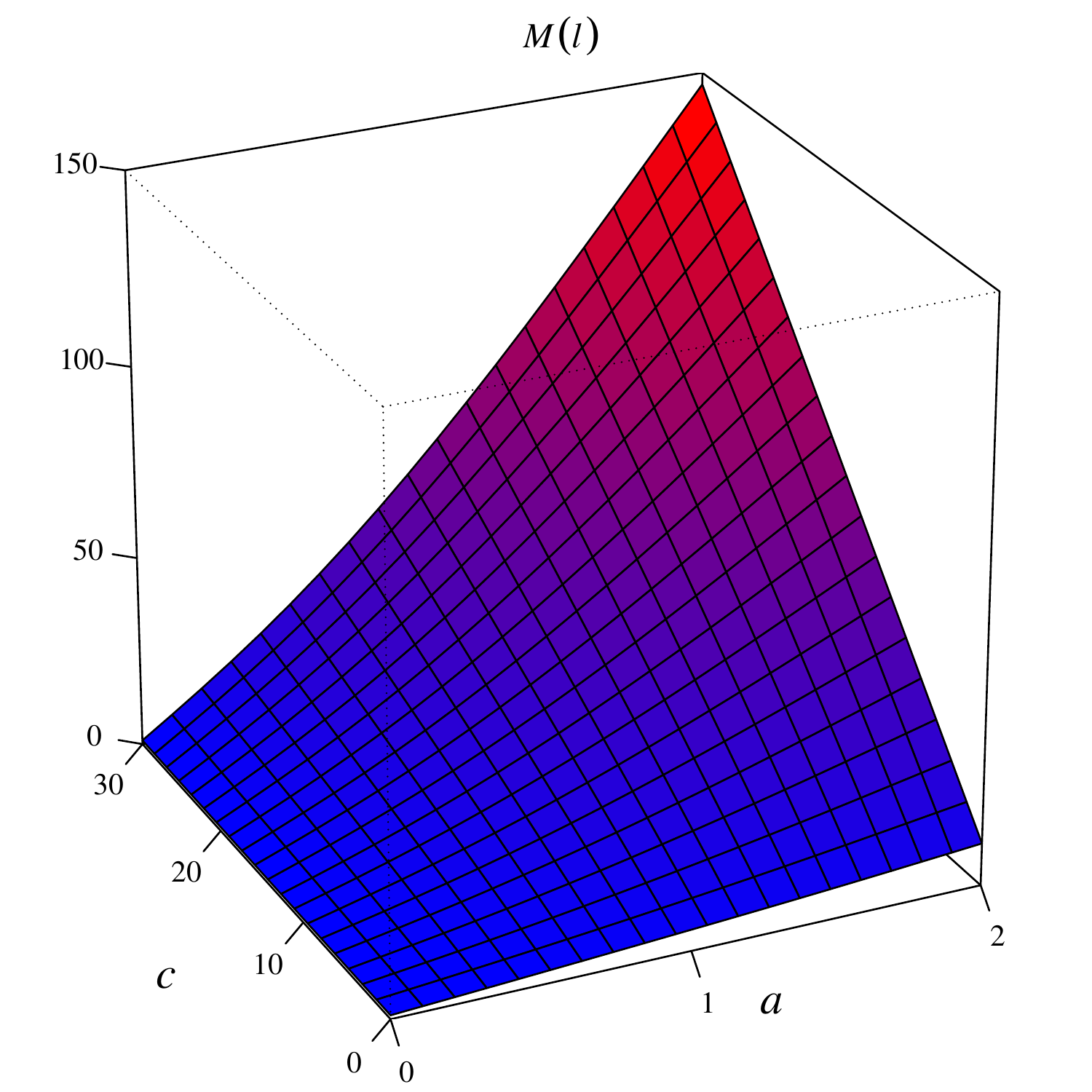}
}
\caption{Number of jumps $M(\ell)$ required to achieve a precision level of $\ell = 0.1$  for $\ell_M$. Left panel: \GG for $a\in(0,2)$ and $\gamma\in(0,0.8)$. Right panel: beta process for $a\in(0,2)$ and $c\in(0,30)$.}
\label{fig:persp}
\end{figure}
\end{center}

\subsection{Normalized random measures with independent increments\label{sec:nrmi}}

Having illustrated the behaviour of the moment-matching methodology for plain CRMs we now investigate it on specific classes of nonparametric priors, which typically involve a transformation of the CRM. Moreover, given their posterior distributions involve updated CRMs it is important to test the moment-matching \FKa also on posterior quantities.  The first class of models we consider are normalized random measures with independent increments (NRMI) introduced by \citet{rlp2003}. Such nonparametric priors have been used as ingredients of a variety of models and in several application contexts. Recent reviews can be found in \citet{lijoi2010models,barrios2013modeling}.

If  $\tilde\mu$ is a CRM with L\'evy intensity \eqref{pim} such that $0<\tilde\mu(\X) < \infty$ (almost surely), then an NRMI is defined as
\begin{equation}
\label{eq:NRMI} \tilde P =\frac{\tilde\mu}{\tilde\mu(\X)}.
\end{equation}
Particular cases of NRMI are then obtained by specifying the CRM in \eqref{eq:NRMI}. For instance, by picking the \GG defined in Example \ref{ex:GG} one obtains the normalized generalied gamma process, denoted by NGG, and first used in a Bayesian context by \citet{lijoi2007controlling}.

\subsubsection{Posterior Distribution of an NRMI}

The basis of any Bayesian inferential procedure is represented by the posterior distribution. In the case of NRMIs, the determination of the posterior distribution is a challenging task since one cannot rely directly on Bayes' theorem (the model is not dominated) and, with the exception of the Dirichlet process, \mbox{NRMIs} are not conjugate as shown in \citet{james2006conjugacy}. Nonetheless, a posterior characterization has been established in \citet{james2009posterior} and it turns out that, even though NRMIs are not conjugate, they still enjoy a sort of ``conditional
conjugacy.'' This means that, conditionally on a suitable latent random variable, the posterior distribution of an NRMI coincides with the distribution of an NRMI having fixed points of discontinuity located at the observations. Such a simple structure suggests that when working with a general NRMI, instead of the Dirichlet process, one faces only one additional layer of difficulty represented by the marginalization with respect to the conditioning latent variable.

Before stating the posterior characterization to be used with our algorithm, we need to introduce some notation and basic facts. Let $(Y_n)_{n\geq 1}$ be an exchangeable sequence directed by an NRMI, i.e.
\begin{align}\label{eq:nonpmic}
&Y_i | \tilde P \simiid \tilde P,\quad \text{ for } i=1,\ldots,n,
\nonumber
\\[-8pt]
\\[-8pt]
\nonumber
&\tilde P \sim Q,
\end{align}
with $Q$ the law of NRMI, and set ${\mathbf Y}=(Y_1, \ldots, Y_n)$. Due to the discreteness of NRMIs, ties will appear with positive probability in ${\mathbf Y}$ and, therefore, the sample information can be encoded by the $K_n=k$ distinct observations $(Y_1^*,\ldots,Y_k^*)$ with frequencies $(n_1, \ldots, n_k)$ such that $\sum_{j=1}^k n_j=n$. Moreover, introduce the nonnegative random variable $U$ such that the
distribution of $[U|\mathbf{Y}]$ has density, w.r.t. the Lebesgue measure,
given by
\begin{equation}
\label{eq:U}\quad f_{U|{\mathbf Y}}(u)\propto u^{n-1}\exp \bigl\{-\psi(u)
\bigr\}\prod_{j=1}^k\tau _{n_j}
\bigl(u|Y_j^* \bigr),
\end{equation}
where $\tau_{n_j}(u|Y_j^*)=\int_0^\infty v^{n_j}\mathrm{e}^{-u v}\rho(\d v|Y_j^*)$ and $\psi$ is the Laplace exponent of $\tilde\mu$ defined by $\psi(u) = -\log\big(L_{\X}(u)\big)$,  cf \eqref{eq:lapl_transform}. Finally, assume $P^*=\mathbb{E}[\tilde P]$  to be nonatomic.

\begin{prop}[\citealp{james2009posterior}]\label{james2}
Let $(Y_n)_{n \geq1}$ be as in \eqref{eq:nonpmic} where $\tilde P$ is an NRMI defined in \eqref{eq:NRMI} with L\'evy intensity as in
\eqref{pim}. Then the posterior distribution of the unnormalized CRM $\tilde \mu$, given a sample ${\mathbf Y}$, is a mixture of the distribution of
$[\tilde\mu|U,\mathbf{Y}]$ with respect to the distribution of $[U|\mathbf{Y}]$. The latter is identified by \eqref{eq:U}, whereas $[\tilde\mu|U=u,\mathbf{Y}]$ is equal in distribution to a CRM with fixed points of discontinuity at the distinct observations $Y_j^*$,
\begin{equation}
\label{eq:post_CRM} \tilde\mu^*+\sum_{j=1}^k
J_j^*\delta_{Y_j^*}
\end{equation}
such that:
\begin{itemize}
\item[(a)] $\tms$ is a CRM characterized by the L\'evy intensity
\begin{equation}
\label{eq:post_levy} \nu^*(\d v, \d x)=\mathrm{e}^{-u v}\nu(\d v, \d x),
\end{equation}
\item[(b)]
the jump height $J_j^*$ corresponding to $Y_j^*$ has density, w.r.t.
the Lebesgue measure, given by
\begin{equation}
\label{eq:post_jump} f_{j}^*(v)\propto
v^{n_j}\mathrm{e}^{-u v}\rho \bigl(\d v|Y_j^* \bigr),
\end{equation}
\item[(c)] $\tms$ and $J_j^*$, $j=1,\ldots,k$, are
independent.
\end{itemize}

Moreover, the posterior distribution of the NRMI $\tilde P$,
conditional on $U$, is given by
\begin{equation}
\label{eq:post.ncrm}\quad [\tilde P | U, \mathbf{Y}] \stackrel{d} {=} w
\frac{\tilde\mu^*}{\tilde\mu^*(\X)}+(1-w) \frac{\sum_{k=1}^k
J_j^*
\delta_{Y_j^*}}{\sum_{l=1}^k J_l^*},
\end{equation}
where $w=\tilde\mu^*(\X)/(\tilde\mu^*(\X)+\sum_{l=1}^k J_l^{*})$.
\end{prop}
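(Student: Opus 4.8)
The plan is to circumvent the non-dominated nature of the model --- which is precisely what prevents a direct application of Bayes' theorem --- by an augmentation that de-normalizes the likelihood. Since $\tilde P$ is almost surely discrete, observing $\mathbf{Y}$ with $k$ distinct values $Y_j^*$ of multiplicities $n_j$ carries ``likelihood'' proportional to $\prod_{j=1}^k \tilde P(\{Y_j^*\})^{n_j} = \tilde\mu(\X)^{-n}\prod_{j=1}^k J_j^{n_j}$, where $J_j$ is the jump of $\tilde\mu$ at $Y_j^*$. The offending factor $\tilde\mu(\X)^{-n}$ is handled through the gamma identity
\[
\frac{1}{\tilde\mu(\X)^{n}} = \frac{1}{\Gamma(n)}\int_0^\infty u^{n-1}\edr^{-u\tilde\mu(\X)}\,\d u,
\]
which introduces the latent variable $U$ and replaces the intractable normalization by an exponential factor $\edr^{-u\tilde\mu(\X)}$ that interacts nicely with the L\'evy--Khintchine structure \eqref{eq:lapl_transform}. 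From here on one works with the joint law of $(\tilde\mu, U, \mathbf{Y})$ and recovers the posterior of $\tilde\mu$ by disintegration, marginalizing $U$ at the end.

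The core of the argument is a Poisson/Palm calculus computation. I would represent $\tilde\mu$ through its underlying Poisson random measure on $\R^+\times\X$ with mean intensity $\nu$ as in \eqref{pim}, and split off the $k$ atoms carrying the observations, writing $\tilde\mu = \tms + \sum_{j=1}^k J_j^*\delta_{Y_j^*}$ with $\tms$ the CRM supported on the complementary part of the point process. The exponential factor then factorizes, $\edr^{-u\tilde\mu(\X)} = \edr^{-u\tms(\X)}\prod_j \edr^{-uJ_j^*}$, so the computation decouples into a diffuse part and $k$ atomic parts. Applying the multivariate Mecke (Campbell) formula to integrate the $k$ marked points against $\nu$, the diffuse part yields, via the Esscher change of measure for Poisson random measures, the normalizing factor $\edr^{-\psi(u)}$ while its conditional law becomes that of a CRM with tilted intensity $\edr^{-uv}\nu(\d v,\d x)$, namely $\tms$ as in (a). Each marked point $j$ contributes the factor $v^{n_j}\edr^{-uv}\rho(\d v|Y_j^*)$ for the jump height, yielding the conditional densities (b) after normalization by $\tau_{n_j}(u|Y_j^*)$; since $P^*$ is nonatomic, the observed values correspond almost surely to $k$ distinct atoms, so ties arise only through the shared atoms and the location part of each marked point integrates $\alpha$. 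The mutual independence (c) is inherited from the independence of a Poisson random measure over disjoint regions.

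Collecting the factors and integrating out $\tms$ and the $J_j^*$ then isolates the marginal of $[U\mid\mathbf{Y}]$, whose density is proportional to $u^{n-1}\edr^{-\psi(u)}\prod_{j=1}^k\tau_{n_j}(u|Y_j^*)$, matching \eqref{eq:U}; the stated mixture representation of $\tilde\mu$ follows by reading off the conditional law $[\tilde\mu\mid U=u,\mathbf{Y}]$ from (a)--(c) and mixing over $U$. For the normalized process I would simply renormalize $\tilde P = \tilde\mu/\tilde\mu(\X)$ using $\tilde\mu(\X) = \tms(\X) + \sum_{l=1}^k J_l^*$, splitting the total mass into its diffuse and atomic contributions; this produces, conditionally on $U$, the two-component mixture \eqref{eq:post.ncrm} with weight $w=\tms(\X)/(\tms(\X)+\sum_l J_l^*)$.

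The main obstacle is the Poisson/Palm disintegration in the second step: one must justify peeling off exactly the $k$ observed atoms from an infinite-activity point process and show rigorously that the residual process, after the exponential reweighting $\edr^{-u\tilde\mu(\X)}$, is again a CRM with the Esscher-tilted intensity $\edr^{-uv}\nu$, while simultaneously extracting the correct reference measures for the atomic jumps so that the normalizing constants $\tau_{n_j}(u|Y_j^*)$ and $\edr^{-\psi(u)}$ assemble into \eqref{eq:U}. Keeping careful track of the interplay between the augmentation and the non-domination --- the likelihood becoming a genuine density only after introducing $U$ --- is the delicate part; once the Mecke-type identity is set up correctly, the remaining manipulations are essentially bookkeeping.
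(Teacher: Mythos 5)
This proposition is stated in the paper as a known result, cited from \citet{james2009posterior}; the paper itself contains no proof of it, so there is no internal argument to compare yours against. Judged on its own terms, your sketch is a faithful outline of the argument in the cited source: the gamma identity $\tilde\mu(\X)^{-n}=\Gamma(n)^{-1}\int_0^\infty u^{n-1}\edr^{-u\tilde\mu(\X)}\,\ddr u$ introducing the latent variable $U$, the exponential (Esscher) tilting of the underlying Poisson random measure yielding the updated intensity \eqref{eq:post_levy} together with the factor $\edr^{-\psi(u)}$, the Palm/Mecke disintegration producing the fixed-atom jump densities \eqref{eq:post_jump}, the independence in (c) inherited from Poisson independence over disjoint regions, and the assembly of the residual factors into the density \eqref{eq:U} of $[U|\mathbf{Y}]$; the passage to \eqref{eq:post.ncrm} is then pure bookkeeping, as you note.

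The one substantive gap is the step you yourself flag: because the model is not dominated, $\tilde\mu(\X)^{-n}\prod_{j}J_j^{n_j}$ is not a likelihood in any off-the-shelf sense, and the conditioning event $\{Y_1=y_1,\dots,Y_n=y_n\}$ has probability zero, so the ``disintegration'' must be given a precise meaning before the Mecke computation can even be set up. In \citet{james2009posterior} this is where the real work lies: the posterior Laplace functional is first computed exactly for data coarsened to a measurable partition of $\X$ (where Bayes' theorem does apply), and the characterization (a)--(c) is obtained by passing to the limit along nested refining partitions, exploiting the nonatomicity of $P^*$ --- equivalently, by invoking the Poisson partition calculus of James to make the Palm computation rigorous. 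Your proposal names this obstacle without resolving it, so it should be read as a correct blueprint of the known proof rather than a self-contained derivation.
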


In order to simplify the notation, in the statement we have omitted explicit reference to the dependence on $[U|\mathbf{Y}]$ of both $\tilde \mu^*$ and $\{J_j^*\dvtx j=1,\ldots,k\}$, which is apparent from \eqref{eq:post_levy} and~\eqref{eq:post_jump}.
A nice feature of the posterior representation of Proposition~\ref{james2} is that the only quantity needed for deriving explicit expressions for particular cases of NRMI is the L\'evy intensity \eqref{pim}. For instance, in the case of the \GG, the CRM part $\tms$ in \eqref{eq:post_CRM} is still a \GG characterized by a L\'evy intensity of the form of~\eqref{eq:ngg}
\begin{equation}
\label{eq:post_levy_ngg} \nu^*(\d v,\d y)= \frac{\edr^{-(1+u)v}}{\Gamma(1-\gamma) v^{1+\gamma}}\,\d v\, a P^*(\d y).
\end{equation}

Moreover, the distribution of the jumps \eqref{eq:post_jump} corresponding to the fixed points of discontinuity $Y_j^*$'s in \eqref{eq:post_CRM} reduces to a gamma distribution with density
\begin{equation}
\label{postfj} f_{j}^*(v)=\frac{(1+u)^{n_j-\gamma}}{\Gamma(n_j-\gamma)} v^{n_j-\gamma-1}
\mathrm{e}^{-(1+u)v}. 
\end{equation}

Finally, the conditional distribution of the non-negative latent variable $U$ given ${\mathbf Y}$ \eqref{eq:U} is given by
\begin{equation}
\label{condu} f_{U|{\mathbf Y}}(u)\propto u^{n-1}(u+
1)^{k\gamma-n}\exp \biggl\{ -\frac
{a}{\gamma}(u+1)^{\gamma}
\biggr\}.
\end{equation}

The availability of this posterior characterization makes it then possible to determine several important quantities such as the
predictive distributions and the induced partition distribution. See \citet{james2009posterior} for general NRMI and \citet{lijoi2007controlling} for the subclass of normalized generalized gamma processes. See also~\citet{argiento2015blocked} for another approach to approximate the NGG with a finite number of jumps.

\subsubsection{Moment-matching for posterior NRMI}

From \eqref{eq:post_CRM} it is apparent that the posterior of the unnormalized CRM $\tilde \mu$, conditional on the latent variable $U$, is composed of the independent sum of a CRM $\tilde \mu^*$ and fixed points of discontinuity at the distinct observations $Y_j^*$. The part which is at stake here is obviously  $\tilde \mu^*$ for which only approximate sampling is possible. As for the fixed points of discontinuities, they are independent from $\tilde \mu^*$ and can be sampled exactly, at least in special cases.

We focus on the case of the NGG process. By~\eqref{eq:post_levy} the L\'evy intensity of $\tilde \mu^*$ is obtained by exponentially tilting the L\'evy intensity of the prior $\tilde \mu$. Hence, the \FKa applies in the same way as for the prior. The sampling of the fixed points jumps is straightforward from the gamma distributions~\eqref{postfj}.
As far as the moments are concerned, key ingredient of our algorithm, the cumulants of $\tilde \mu^*$ are equal to $\kappa_i^*= a\frac{{(1-\gamma)_{(i-1)}}}{(u+1)^{i-\gamma}}$ and the corresponding moments are then obtained via Proposition \ref{prop:crm_moments}.

Our simulation study is based on a sample of size $n=10$. Such a small sample size is challenging in the sense that the data provide rather few information and the CRM part of the model is still prevalent. We examine three possible clustering configurations of the observations $Y_i^*$s: (i) $k=1$ group, with $n_1=10$, (ii) $k=3$ groups, with $n_1=1,\,n_2=3,\,n_3=6$, and (iii) $k=10$ groups, with $n_j=1$ for $j=1,\ldots,10$. First let us consider the behaviour of $f_{U|{\mathbf Y}}$, which is illustrated in Figure~\ref{fig:U-density} for $n=10$ and $k\in\{1,2,\ldots,10\}$. It is clear that the smaller the number of clusters, the more $f_{U|{\mathbf Y}}$ is concentrated on small values, and vice versa.

\begin{center}
\begin{figure}[ht!]
\includegraphics[width=.49\linewidth]{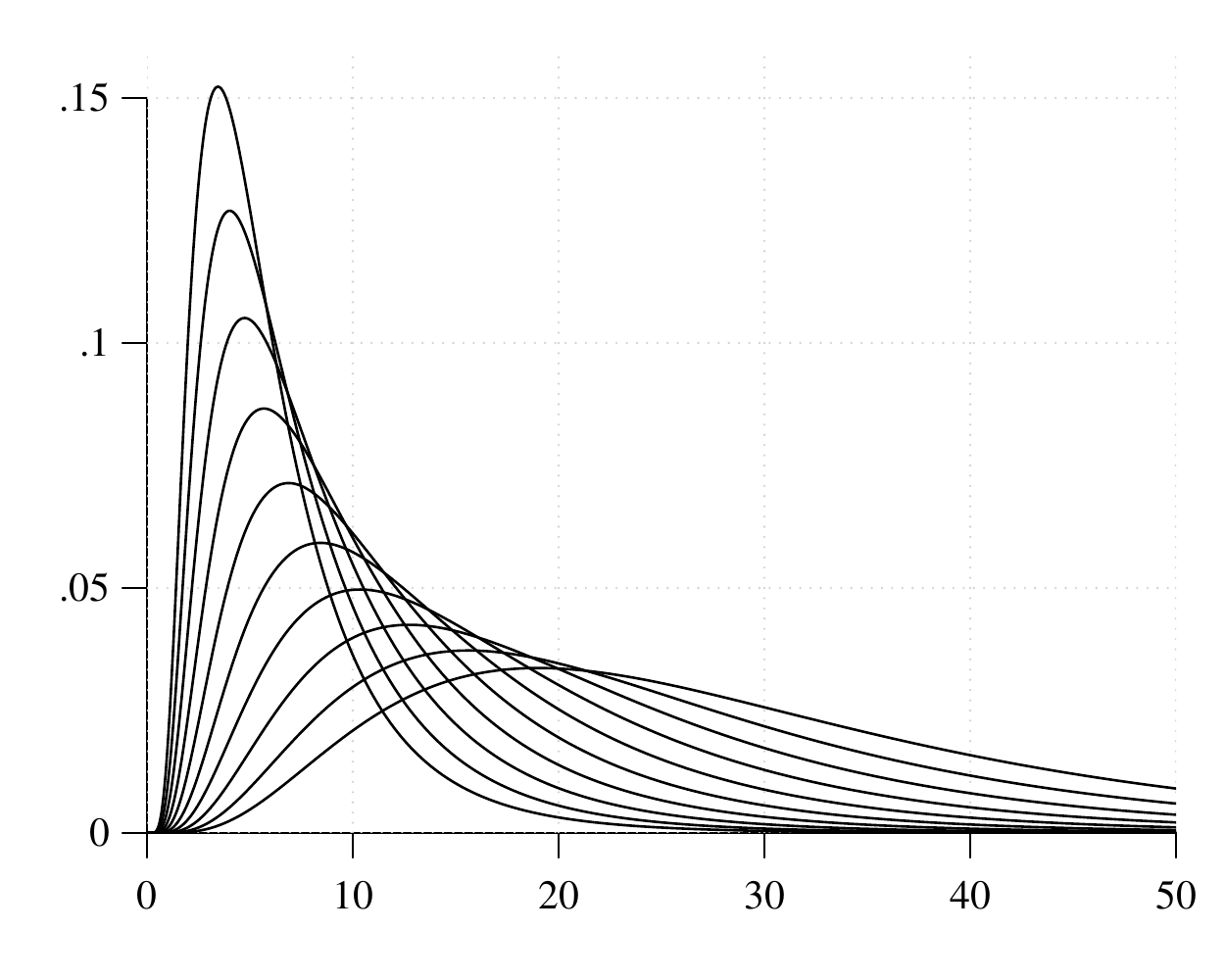}
\caption{NGG posterior: density $f_{U|{\mathbf Y}}$ with $n=10$ observations, $a=1$, $\gamma=0.5$, and number of clusters $k \in \{1, \ldots, 10\}$; $k=1$ corresponds to the most peaked density and $k=10$ to the flattest.}
\label{fig:U-density}
\end{figure}
\end{center}

Now we consider $\tilde \mu^*(\X)$, the random total mass corresponding to the CRM part of the posterior only given in \eqref{eq:post.ncrm}. Such a quantity depends on $U$ whose distribution is driven by the data  $\mathbf Y$. In order to keep the presentation as neat as possible, and in the same time to remain consistent with the data, we choose to condition on $U=u$ for $u$ equal to the mean of $f_{U|{\mathbf Y}}$, the most natural representative value. Given this, it is possible to run the \FKa on the CRM part $\tilde \mu^*$ of the posterior and compute moment-matching index $\ell_M$ as the number of jumps varies. Figure~\ref{fig:NRMI} shows these results for the \IGau CRM, a special case of the \GG corresponding to $\gamma=0.5$. Such posteriors were sampled under the above mentioned $\mathbf Y$ clustering configuration scenarios (i)-(iii), which led to mean values of $U|{\mathbf Y}$ of, respectively, $6.3$, $8.9$ and $25.1$. The plot also  displays a comparison to the prior values of $\ell_M$ and indicates that for a given number of jumps the approximation error, measured in terms of $\ell_M$, is smaller for the posterior CRM part $\tilde \mu^*$ w.r.t. to the prior CRM $\tilde \mu$.

Additionally, instead of considering only the CRM part $\tilde \mu^*$ of the posterior, one may be interested in the quality of the full posterior which includes also the fixed discontinuities. For this purpose we consider an index which is actually of interest in its own. In particular, we evaluate the relative importance of the CRM part w.r.t. the part corresponding to the fixed points of discontinuity in terms of the ratio $\E\big(\sum_{j=1}^k J_j^*\big)/\E\big(\tilde\mu^*(\X)\big)$. Loosely speaking one can think of the numerator as the expected weight of the data and the denominator as the expected weight of the prior. Recall that in the NGG case, for a given pair $(n,k)$ and conditional on $U=u$, the sum of fixed location jumps is a $\text{gamma}(n-k\gamma,u+1)$. Hence, the index becomes
    \begin{equation}\label{eq:relative}
    \frac{\E\big(\sum_{j=1}^k J_j^*\vert U=u\big)}{\E\big(\tilde\mu^*(\X)\vert U=u\big)} = \frac{(n-k\gamma)/(u+1)}{a/(u+1)^{1-\gamma}}= \frac{n-k\gamma}{a(u+1)^{\gamma}}.
    \end{equation}

By separately mixing the conditional expected values in \eqref{eq:relative} over $f_{U|{\mathbf Y}}$ (we use an adaptive rejection algorithm to sample from $f_{U|{\mathbf Y}}$) we obtained the results summarized in the table of Figure~\ref{fig:NRMI}.
We can appreciate that the fixed part typically overcomes (or is at least of the same order than) the CRM part, a phenomenon which uniformly accentuates as the sample size $n$ increases. Returning to the original problem of measuring the quality of approximation in terms of moment matching, these findings make it apparent that the comparative results of Figure~\ref{fig:NRMI} between prior and posterior are conservative. In fact, if performing the moment-match on the whole posterior, i.e. including the fixed jumps which can be sampled exactly, the corresponding moment-matching index would, for any given truncation level $M$, indicate a better quality of approximation w.r.t. the index based solely on $\tilde \mu^*$. Note that computing the moments of $\tilde \mu^*(\X)+\sum_{i=1}^k J_i$ straightforward given the independence between $\tilde \mu^*$ and the fixed jumps $J_i$'s and also among the jumps themselves. From a practical point of view the findings of this section suggest that a given quality of approximation $\ell$ in terms of moment-match for the prior represents an upper bound for the quality of approximation in the posterior.

\begin{center}
\begin{figure}[ht!]
\begin{minipage}{7cm}
\includegraphics[width=\linewidth]{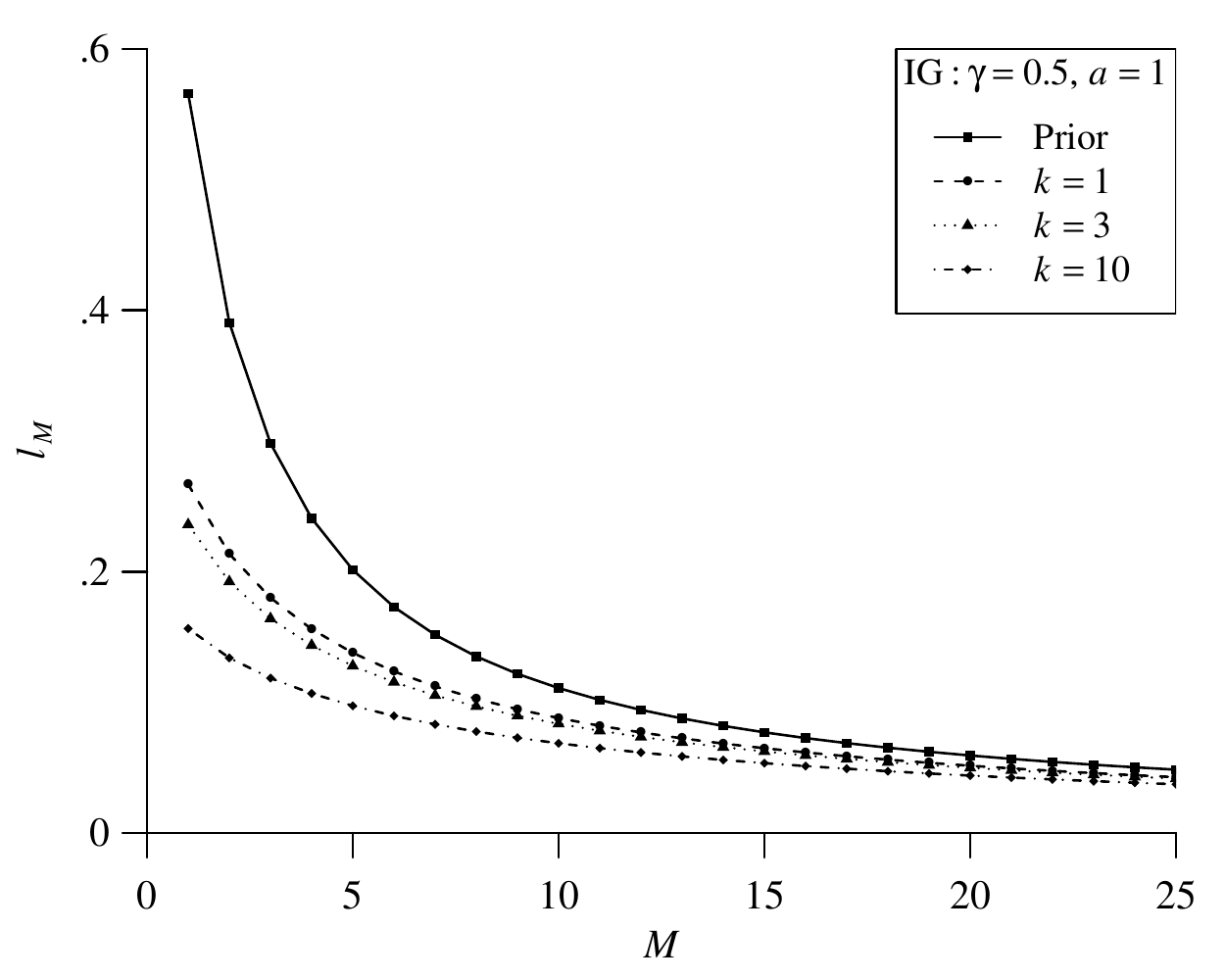}
\end{minipage}
\hskip1cm
\begin{minipage}{5cm}
\begin{tabular}{cccc}
\toprule
& \multicolumn{3}{ c }{$\E\big(\sum_{j=1}^k J_j^*\big)/\E\big(\tilde\mu^*(\X)\big)$} \\
  \hline
$k\quad \backslash \quad n$& 10&30&100\\
  \hline
$1$ & 3.34 & 7.30 & 13.50 \\
$\sqrt{n}$ &  2.65 & 4.68 & 6.05 \\
$n$ &  0.89 & 0.98 & 0.99 \\
\bottomrule
\end{tabular}
\end{minipage}
\caption{Inverse-Gaussian process ($\gamma = 0.5$) with $a=1$. Left: Moment-matching errors $\ell_M$ as the number of jumps $M$ varies. $\ell_M$ corresponding to prior $\tilde \mu$  (continuous line) and posterior $\tilde \mu^*$ under $\mathbf{Y}$ clustering scenarios (i) (dashed line), (ii) (dotted line), (iii) (dotted-dashed line).
Right: Index of relative importance $\E\big(\sum_{j=1}^k J_j^*\big)/\E\big(\tilde\mu^*(\X)\big)$ 
for varying $(n,k)$.}
\label{fig:NRMI}
\end{figure}
\end{center}

\subsubsection{A note on the inconsistency for diffuse distributions\label{sec:asymp}}

In the context of Gibbs-type priors, of which the normalized \GG  is a special case, \citet{de2012asymptotic} showed that, if the data are generated from a ``true'' $P_0$, the posterior of $\tilde P$  concentrates at a point mass which is the linear combination
\begin{align*}
b P^*(\cdot) +(1-b)P_0(\cdot)
\end{align*}
of the prior guess $P^* = \E(\tilde P)$ and $P_0$. The weight $b$ depends on the prior and, indirectly, on $P_0$, since $P_0$ dictates the rate at which the distinct observations $k$ are generated. For a diffuse $P_0$, all observations are distinct and $k=n$ (almost surely). In the NGG case this implies that $b=\gamma$ and hence the posterior is inconsistent since it does not converge to $P_0$. For the inverse-Gaussian process, i.e. with $\gamma=0.5$, the posterior distribution gives asymptotically the same weight to $P^*$ and $P_0$.
The last row of the table of Figure~\ref{fig:NRMI}, which displays the ratio $\E\big(\sum_{j=1}^k J_j^*\big)/\E\big(\tilde\mu^*(\X)\big)$ for $k=n$, is an illustration of this inconsistency result since the ratio gets close to $1$ as $n$ grows. In contrast, when $P_0$ is discrete, which implies that $k$ increases at a slower rate than $n$, one always has consistency. This is illustrated by the first two rows of the table of Figure~\ref{fig:NRMI}, where one can appreciate that the ratio $\E\big(\sum_{j=1}^k J_j^*\big)/\E\big(\tilde\mu^*(\X)\big)$ increases as $n$ increases, giving more and more weight to the data. These findings suggest that consistency issues for general NRMI could be explored from new perspectives based on the study of the asymptotic behavior of $f_{U|{\mathbf Y}}$, which will be subject to future work.

\subsection{Stable-beta Indian buffet process\label{sec:ibp}}

The Indian buffet process (IBP), introduced in \citet{ghahramani2005infinite}, is one of the most popular models for feature allocation and is closely connected to the beta process discussed in Example \ref{ex:SBP}. In fact, when marginalizing out the Dirichlet process and considering the resulting partition distribution one obtains the well known Chinese restaurant process. Likewise, as shown in \citet{thibaux2007hierarchical}, when integrating out a beta process in a Bernoulli process (\BeP) model one obtains the IBP. Recall that a Bernoulli process, with an atomic base measure $\tilde\mu$, is a stochastic process whose realizations are collections of atoms of mass 1, with possible locations given by the atoms of the  base measure $\tilde\mu$. Such an atom is element of the collection with probability given by the jump size in $\tilde\mu$. Later, \citet{teh2009indian} generalized the construction and defined the stable-beta \IBP as
\begin{align}\label{eq:BeP-model}
&Y_i \vert \tilde \mu \simiid \BeP(\tilde \mu)\quad \text{for } i = 1,\ldots,n,
\nonumber
\\[-8pt]
\\[-8pt]
\nonumber
&\tilde \mu \vert c, \sigma, aP^* \sim \text{SBP}(c,\sigma,aP^*).
\end{align}

Given the construction involves a CRM, it is clear that any conditional simulation algorithm will need to rely on some truncation for which we use our moment-matching \FKa.

\subsubsection{Posterior distribution in the IBP}
Let us consider a conditional iid sample ${\mathbf Y}=(Y_1, \ldots, Y_n)$ as in \eqref{eq:BeP-model}. Note that due to the discreteness of $\tilde \mu$, ties appear with positive probability. We adopt the same notations for the ties $Y_j^*$ and frequencies $n_j$ as in Section~\ref{sec:nrmi}. Then we can state the following result which highlights the  posterior structure of the \SBP in the \IBP.
\begin{prop}[\citealp{teh2009indian}]\label{thm:hjort}
Let $(Y_n)_{n \geq1}$ be as in \eqref{eq:BeP-model}. Then the posterior distribution of $\tilde \mu$ conditional on ${\mathbf Y}$ is given by the distribution of
\begin{equation*}
\tilde\mu^*+\sum_{j=1}^k
J_j^*\delta_{Y_j^*}
\end{equation*}
where
\begin{itemize}
\item[(a)] $\tms$ is a \SBP characterized by the L\'evy intensity
%
\begin{equation*}
\nu^*(\d v, \d x)=(1-v)^{n}\nu(\d v, \d x),
\end{equation*}
\item[(b)]
the jump height $J_j^*$ corresponding to $Y_j^*$ is beta distributed
%
\begin{equation*}
J_{j}^*\sim \emph{beta}(n_j-\sigma,c+\sigma+n- n_j),
\end{equation*}
\item[(c)] $\tms$ and $J_j^*$, $j=1,\ldots,k$, are
independent.
\end{itemize}
\end{prop}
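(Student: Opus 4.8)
The plan is to exploit the Poisson-process structure underlying the \SBP and to treat the Bernoulli likelihood as an independent marking of its atoms, so that the posterior decomposition falls out of the thinning and Palm calculus for Poisson processes. Write $\tilde\mu=\sum_\ell J_\ell\delta_{Z_\ell}$, so that $\Pi=\sum_\ell \delta_{(J_\ell,Z_\ell)}$ is a Poisson random measure on $\R^+\times\X$ with mean intensity $\nu$ given by \eqref{eq:stable_beta_intensity}. Conditionally on $\tilde\mu$, the model \eqref{eq:BeP-model} attaches to each atom $(J_\ell,Z_\ell)$ an independent mark $\mathbf{b}_\ell=(b_{1\ell},\dots,b_{n\ell})\in\{0,1\}^n$, with $b_{i\ell}$ Bernoulli of success probability $J_\ell$, recording which of the $n$ observations select that atom. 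By the marking theorem, $\{(J_\ell,Z_\ell,\mathbf{b}_\ell)\}_\ell$ is a Poisson process on $\R^+\times\X\times\{0,1\}^n$ with intensity $\nu(\d v,\d x)\prod_{i=1}^n v^{b_i}(1-v)^{1-b_i}$.

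First I would split the mark space into the pattern $\mathbf{b}=\mathbf{0}$ (atoms selected by nobody) and $\mathbf{b}\neq\mathbf{0}$ (atoms selected at least once). By the coloring theorem these are independent Poisson processes, and since the data $\mathbf{Y}$ is a measurable function of the selected atoms alone, the never-selected part is independent of $\mathbf{Y}$; this already yields the independence in (c) between $\tms$ and the observed jumps. Projecting the $\mathbf{b}=\mathbf{0}$ atoms back onto $\R^+\times\X$ gives a Poisson process of intensity $(1-v)^n\nu(\d v,\d x)$, whose associated \CRM is exactly $\tms$. Substituting \eqref{eq:stable_beta_intensity} and using $(1-v)^n v^{-\sigma-1}(1-v)^{c+\sigma-1}=v^{-\sigma-1}(1-v)^{(c+n)+\sigma-1}$ shows that $\tms$ is again a \SBP, now with concentration parameter $c+n$, which proves (a).

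Next I would identify the jump at each observed location through the Palm distribution of the marked process (Mecke's formula). Given $\mathbf{Y}$, at each distinct location $Y_j^*$ there is precisely one atom whose mark has exactly $n_j$ ones; by exchangeability of the $n$ observations the conditional law depends only on $n_j$. Mecke's formula makes rigorous the heuristic that, conditionally on a point at height $v$ and location $Y_j^*$ carrying such a mark, the density of $v$ is proportional to the ``prior'' jump kernel times the binomial likelihood, i.e. to $\rho(\d v\mid Y_j^*)\,v^{n_j}(1-v)^{n-n_j}$. For the \SBP kernel this is proportional to $v^{n_j-\sigma-1}(1-v)^{c+\sigma+n-n_j-1}$, the density of a $\mathrm{beta}(n_j-\sigma,\,c+\sigma+n-n_j)$ law, giving (b); the mutual independence of the $J_j^*$ and their independence from $\tms$ in (c) follow because these correspond to disjoint regions of the mark space, together with the standing assumption that $P^*$ is nonatomic, so the observed locations are almost surely distinct.

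The main obstacle is that the model is not dominated: there is no joint density of $(\tilde\mu,\mathbf{Y})$ against a fixed reference measure, so the jump-size update in (b) cannot be justified by elementary conditioning. The rigorous device is precisely the Palm/Mecke calculus (equivalently the disintegration of the law of the marked Poisson process), and the care needed there is to check the relevant integrability so that the Palm density normalizes: since $n_j\ge 1>\sigma$ the factor $v^{n_j}$ controls the $v^{-\sigma-1}$ singularity at the origin, while $c>-\sigma$ controls the behaviour at $v=1$, so that $\int_0^1 v^{n_j}(1-v)^{n-n_j}\rho(\d v\mid Y_j^*)<\infty$ and the beta parameters $n_j-\sigma$ and $c+\sigma+n-n_j$ are both positive. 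One should likewise confirm that $(1-v)^n\nu$ still satisfies the integrability condition \eqref{eq:levy_intensity}, which is immediate since multiplying by the bounded factor $(1-v)^n$ only improves integrability.
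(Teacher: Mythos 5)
The paper does not actually prove this proposition: it is quoted verbatim from \citet{teh2009indian}, and the appendix only proves Proposition 1 and the tail-sum bound. So there is no internal proof to compare against, and your proposal has to be judged on its own merits. On those merits, your route is sound and is essentially the modern Poisson-calculus derivation of such posteriors: representing $\tilde\mu$ through the Poisson random measure $\sum_\ell\delta_{(J_\ell,Z_\ell)}$, invoking the marking theorem to absorb the Bernoulli selections into the intensity, splitting off the never-selected atoms by the restriction (coloring) theorem to obtain part (a) with intensity $(1-v)^n\nu(\d v,\d x)$ (and, implicitly, the updated parameters $c^*=c+n$, $a^*=a(c+\sigma)_{(n)}/(c+1)_{(n)}$ stated after the proposition), and identifying the conditional law of the selected heights by Palm/Mecke disintegration, which for the \SBP kernel gives exactly the $\mathrm{beta}(n_j-\sigma,\,c+\sigma+n-n_j)$ law of (b). Your integrability checks (the factor $v^{n_j}$ with $n_j\ge 1$ taming the $v^{-\sigma-1}$ singularity, $c>-\sigma$ handling the endpoint $v=1$, and $(1-v)^n\nu$ trivially satisfying \eqref{eq:levy_intensity}) are the right ones, and your handling of the non-dominated conditioning is appropriately flagged.

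One justification is imprecise, though easily repaired within your own framework. You argue that the mutual independence of the $J_j^*$'s in (c) follows because they ``correspond to disjoint regions of the mark space.'' That is only true for atoms carrying \emph{different} selection patterns $\mathbf{b}$; two distinct observed atoms can perfectly well carry the same pattern (take $n=1$: every selected atom has mark $\mathbf{b}=(1)$), so disjointness of mark regions does not separate them. The correct argument is the disintegration you already use for (b), applied jointly: the selected process with pattern $\mathbf{b}$ is a Poisson process on $\X\times(0,1]$ whose finite intensity factorizes as $\big(\int_0^1 v^{|\mathbf{b}|}(1-v)^{n-|\mathbf{b}|}\rho(\d v)\big)\,aP^*(\d x)$ times a normalized kernel in $v$; equivalently it is an independently marked Poisson process over its $\X$-projection, so that \emph{conditionally on the locations} the heights are mutually independent draws from the tilted kernel (this is the multivariate Mecke formula, or ``marking in reverse''). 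Nonatomicity of $P^*$ then guarantees the observed locations are a.s.\ distinct so each $Y_j^*$ carries exactly one atom, and independence across different patterns and from the $\mathbf{b}=\mathbf{0}$ part does follow from disjointness of mark regions. With that substitution, parts (a), (b) and (c) are all established and the proof is complete.
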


Note that due to the polynomial tilting of $\nu$ by $(1-u)^n$ in \textit{(a)} above, the CRM part $\tilde\mu^*$ is still a \SBP with updated parameters
$$c^* = c+n \text{ and } a^* = a\frac{(c+\sigma)_{(n)}}{(c+1)_{(n)}},$$
while the discount parameter $\sigma$ remains unchanged.

\subsubsection{Moment-matching for the IBP}
In order to implement the moment-matching methodology we first need to evaluate the posterior moments of the random total mass. For this purpose, we rely on the moments characterization in terms of the cumulants provided in Proposition~\ref{prop:crm_moments}. The cumulants $\kappa_i^*$ of the CRM part $\tms(\X)$ are obtained from Table~\ref{tab:moments} with the appropriate parameter updates which leads to
$$\kappa_i^*=a^*\frac{(1-\sigma)_{(i-1)}}{(1+c^*)_{(i-1)}}=a\frac{(1-\sigma)_{(i-1)}(c+\sigma)_{(n)}}{(1+c)_{(n+i-1)}}.$$

We consider two stable-beta processes: the beta process prior $\tilde\mu\sim\text{SBP}(c=1,\sigma=0,a=1)$ and the stable-beta process prior $\tilde\mu\sim\text{SBP}(c=1,\sigma=0.5,a=1)$. We let $n$ vary in $\{5,10,20\}$. In contrast to the NRMI case, there is no need to work under different scenarios for the clustering profile of the data, since the posterior CRM $\tilde \mu^*$ is not affected by them with only the sample size entering the updating scheme. We  compare the prior moment-match for $\tilde \mu$ with the posterior moment-match for $\tilde \mu^*$ in terms of our discrepancy index $\ell_M$ and the results are displayed in Figure~\ref{fig:IBP}. The comparison shows that there is a gain in precision between prior and posterior distributions in terms of $\ell_M$ suggesting that the a priori error level $\ell$ represents an upper bound for the posterior approximation error.

As in Section~\ref{sec:nrmi}, we also evaluate the relative weights of fixed jumps and posterior CRM or, roughly, of the data w.r.t. the prior. Recalling that fixed location jumps $J_j^*$ are independent and $\text{beta}(n_j-\sigma,c+\sigma+n- n_j)$ and some algebra allow to re-write the ratio of interest as
    $$
    \frac{\E\big(\sum_{j=1}^k J_j^*\big)}{\E\big(\tilde\mu^*(\X)\big)} =
    \frac{(n-k\sigma)(c+1)_{(n-1)}}{a(c+\sigma)_{(n)}}.
    $$

Table~\ref{tab:SBP_ratio} displays the corresponding values for different choices of $n$ and $k$. As in the NRMI case, the fixed part overcomes the CRM part, which means that the data dominate the prior, and, moreover, their relative weight increases as $n$ increases. In terms of moment-matching this shows that, if one looks at the overall posterior structure, the approximation error connected to the truncation is further dampened.

\begin{center}
\begin{figure}[ht!]
\subfloat[beta process ($\sigma=0$)]{
\includegraphics[width=.45\linewidth]{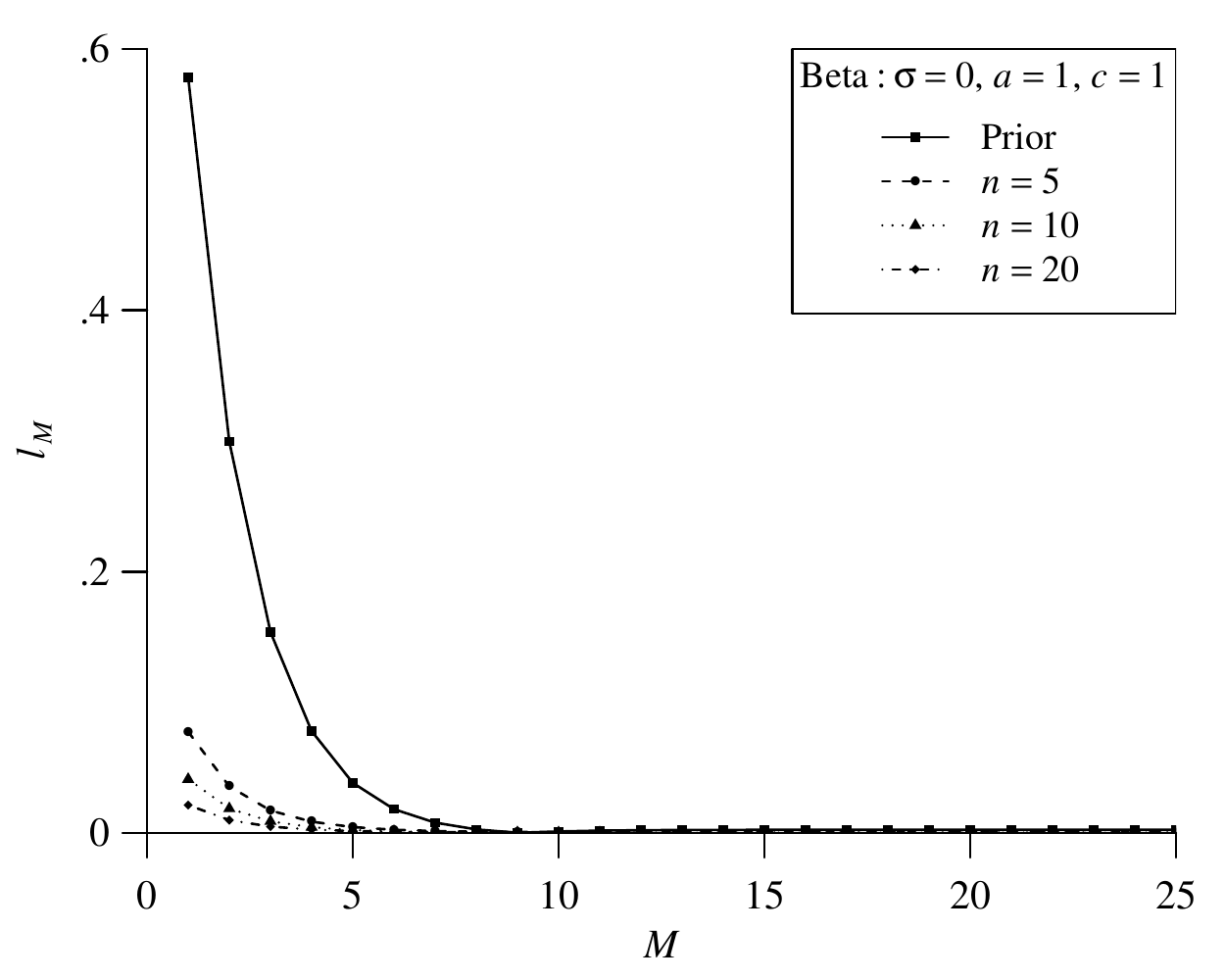}
}
\subfloat[stable beta process ($\sigma=0.5$)]{
\includegraphics[width=.45\linewidth]{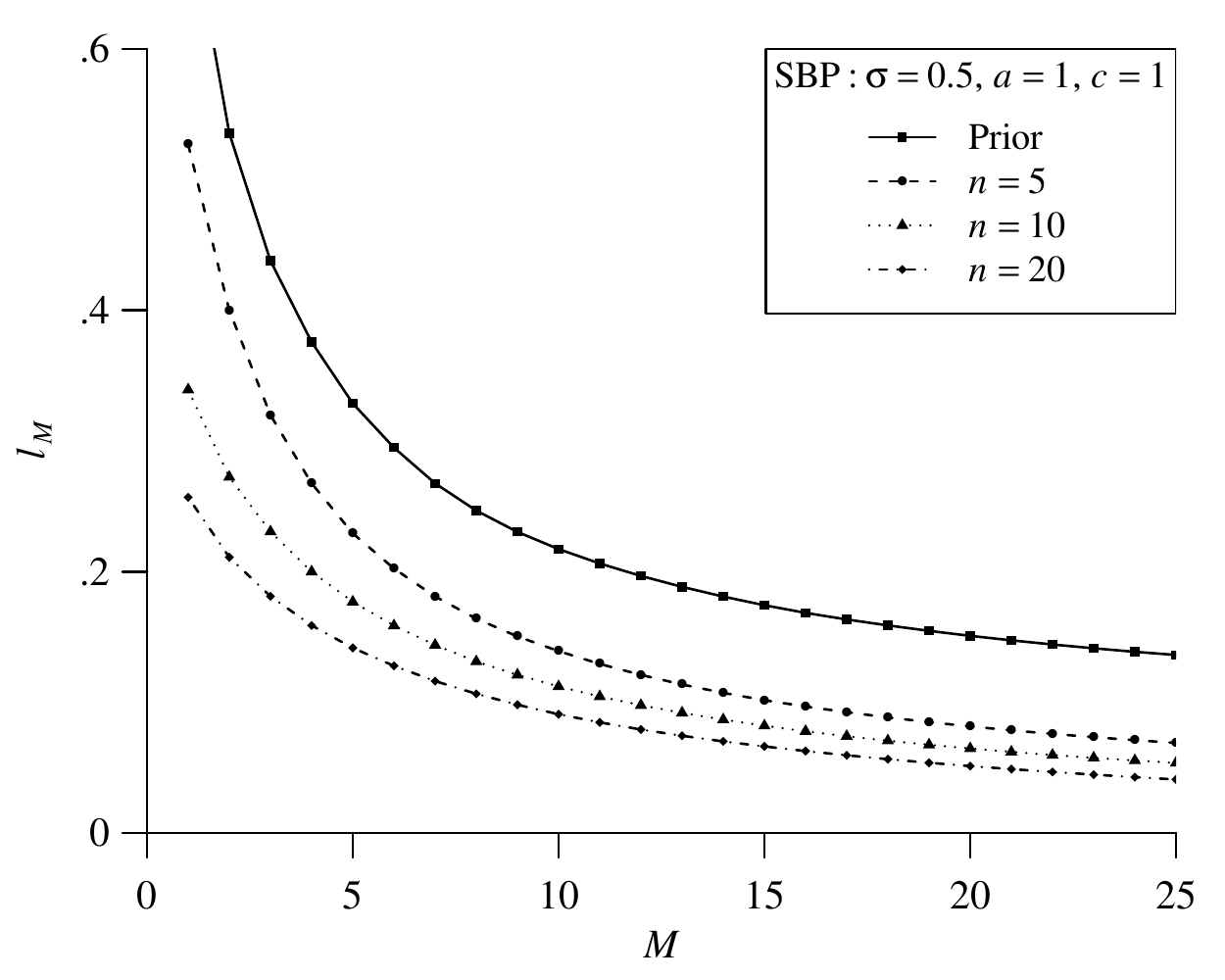}
}
\caption{Moment-matching errors $\ell_M$ as the number of jumps $M$ varies for the \SBP with $c=1$, $a=1$, and, respectively, $\sigma=0$ (left panel) and $\sigma=0.5$ (right panel). $\ell_M$ corresponding to prior $\tilde \mu$  (continuous line) and the posterior $\tilde \mu^*$ given with $n=5$ (dashed line) and $n=10$ (dotted line)  and $n=20$ (dashed-dotted line) observations.}\label{fig:IBP}
\end{figure}
\end{center}
\begin{center}
\begin{table}[ht!]
\begin{tabular}{cccc}
\toprule
&
\multicolumn{3}{ c }{$\E\big(\sum_{j=1}^k J_j^*\big)/\E\big(\mu^*(\X)\big)$} \\
\hline
$k\quad \backslash \quad n$& 10&30&100\\
  \hline
$1$ & 2.57 & 4.71 & 8.79 \\
$n^\sigma$ &    2.28 & 4.36 & 8.39 \\
$n$ &    1.35 & 2.40 & 4.41 \\
\bottomrule
\end{tabular}
\caption{\label{tab:SBP_ratio} Stable-beta process with $\sigma=0.5$, $c=1$ and $a=1$: Index of relative importance $\E\big(\sum_{j=1}^k J_j^*\big)/\E\big(\mu^*(\X)\big)$ for varying $(n,k)$.}
\end{table}
\end{center}

\subsection{Practical use of the moment-matching criterion\label{sec:practical}}

We illustrate the use of the moment-matching strategy by implementing it within location-scale NRMI mixture models, which can be represented in hierarchical form as
\begin{align*}
Y_i | \mu_i,\sigma_i & \simind k( \cdot |
\mu_i,\sigma_i),\quad i=1,\ldots,n,
\nonumber
\\
(\mu_i,\sigma_i)|\tilde P & \simiid \tilde P,\quad i=1,\ldots,n,\nonumber
\\
\tilde P & \sim \text{NRMI},\nonumber
\end{align*}
where $k$ is a kernel parametrized by $(\mu,\sigma)\in\R\times\R_+$ and the NRMI $\tilde P$ is defined in~\eqref{eq:NRMI}. Under this framework, density estimation is carried out by evaluating the posterior predictive density. Specifically, we consider the Gaussian kernel $k(x|\mu,\sigma) = \mathcal{N}(x|\mu,\sigma)$ and NGG on locations and scales with a normal base measure $P_0$, parameter $\theta=1$ in Equation~\eqref{eq:ngg}, and varying stability parameter $\gamma \in\{0,0.25,0.5,0.75\}$.

The dataset we consider is the popular \texttt{Galaxy} dataset, which consists of velocities of $82$ distant galaxies diverging from our own galaxy. Since the data are clearly away from zero (range from 9.2 to 34), Gaussian kernels, although having the whole real line as support, are typically employed in its analysis.

As far as the simulation algorithm is concerned, based on Sections~\ref{sec:prior} to~\ref{sec:ibp}, the following moment-matching Ferguson \& Klass posterior sampling strategy is implemented: (1) evaluate the threshold $M(\ell)$ which validates trajectories of the CRM using Algorithm~\ref{algo:FKa} on the prior distribution; (2) implement Algorithm~\ref{algo:FKa} on the posterior distribution using the threshold $M(\ell)$. More elaborate and suitably tailored moment-matching strategies can be devised for specific models. However, to showcase the generality and simplicity of our proposal we do not pursue this here.

In particular, we set
$\ell_M = 0.01$. We compare the output to the \FKa with heuristic relative error $e_M$ criterion, which consists of step (2) only with truncation dictated by the relative error for which we set $e_M \in\{ 0.1, 0.05, 0.01\}$. For both algorithms the Gibbs sampler is run for $20,000$ iterations with a burn-in of $4,000$, thinned by a factor of $5$.

In order to compare the results, we compute the Kolmogorov--Smirnov distance $d_{KS}(\hat F_{\ell_M}, \hat F_{e_M})$ between associated estimated cumulative distribution functions (cdf) $\hat F_{\ell_M}$ and $\hat F_{e_M}$ under, respectively, the moment-match and the relative error criteria. The results are displayed in Table~\ref{tab:galaxy}. The estimated cdf $\hat F_{\ell_M}$ with $\ell_M = 0.01$ can be seen as a reference estimate since the truncation error is controlled uniformly across the different values of $\gamma$ by the moment-match at the CRM level. First, one immediately notes that the smaller $e_M$, the closer the two estimates become (in the $d_{KS}$ distance). Second, and more importantly, the numerical values of the distances heavily depend on the particular choice of the parameter $\gamma$ for any given $e_M$. In fact, $\hat F_{\ell_M}$ and $\hat F_{e_M}$ are significantly further apart for large values of $\gamma$ than for small ones. This clearly shows that the quality of approximation with the heuristic criterion of the relative index is highly variable in terms of a single parameter; in passing from $\gamma=0$ to $\gamma=0.75$ the distance increases by at least a factor of $2$. This means that for comparing correctly CRM based models with different parameters one would need to pick different relative indices for each value of the parameter. However, there is no way to guess such thresholds without the guidance of an analytic criterion. And, this already happens by varying a single parameter, let alone when changing CRMs for which the same $e_M$ could imply drastically different truncation errors. This seems quite convincing evidence supporting the abandonment of heuristic criteria for determining the truncation threshold and the adoption of principled approaches such as the moment-matching criterion proposed in this paper.

\begin{table}[ht!]
\begin{center}
\begin{tabular}{lccc}
\toprule
$\gamma$& $e_M = 0.1$ & $e_M = 0.05$ & $e_M = 0.01$ \\
  \hline 
$0$ & 19.4 & 15.5 & 9.2 \\
$0.25$ & 31.3 & 23.7 & 15.1 \\
$0.5$ & 42.4  & 28.9 & 18.3 \\
$0.75$ & 64.8 & 41.0 & 23.2 \\
\bottomrule
\end{tabular}
\end{center}
\caption{\label{tab:galaxy} \texttt{Galaxy} dataset. Kolmogorov--Smirnov distance $d_{KS}(\hat F_{\ell_M}, \hat F_{e_M})$ between estimated cdfs $\hat F_{\ell_M}$ and $\hat F_{e_M}$ under, respectively, the moment-match (with $\ell_M = 0.01$) and the relative error (with $e_M = 0.1, 0.05, 0.01$) criteria. The mixing measure of normal mixture is the \NGG with varying $\gamma\in\{0,0.25,0.5,0.75\}$.}
\end{table}

\appendix
\section{Proof of Proposition 1} 
For any measurable set $A$ of $\X$, the $n$-th moment of $\tilde \mu(A)$, if it exists, is given by $m_n(A) = (-1)^nL_A^{(n)}(0)$, where $L_A^{(n)}(0)$ denotes the $n$-th derivative of the Laplace transform $L_A$ in \eqref{eq:lapl_transform} evaluated at 0. The result is proved by applying Fa\`a di Bruno's formula to~\eqref{eq:lapl_transform} for obtaining the derivatives. 
\section{Evaluation of the tail sum of the \SBP}\label{sec:app_tail}
Here we provide an evaluation of the tail sum~\eqref{eq:tail_sum} in the case of the \SBP.
We start by stating a lemma useful for upper bounding the tail sum.
\begin{lemma}\label{lem:N} Let function $N(\,\cdot\,)$ be as in~\eqref{eq:funcM_BP} for the \SBP. Then for any $\xi>0$
\begin{equation*}
N^{-1}(\xi) \leq  \left\{
\begin{matrix}
\edr^{\frac{1-\xi/a}{c}} \qquad \text{if } \sigma=0,\\
(\alpha\xi+\beta)^{-1/\sigma} \quad \text{if } \sigma\in(0,1),
\end{matrix}
\right.
\end{equation*}
where $\alpha = \sigma\Gamma(1-\sigma)\frac{\Gamma(c+\sigma)}{a\Gamma(c+1)}$ and $\beta = 1-\frac{\sigma}{c+\sigma}\Gamma(1-\sigma)$.
\end{lemma}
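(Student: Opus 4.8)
The plan is to convert the stated bound on $N^{-1}$ into a pointwise bound on $N$ itself, and then to verify that bound by a single-variable monotonicity analysis. The function $N$ in~\eqref{eq:funcM_BP} is continuous and strictly decreasing from $(0,1)$ onto $(0,\infty)$, so if $h$ denotes the claimed (decreasing) right-hand side, it suffices to establish the pointwise inequality $N(v)\le h^{-1}(v)$ for every $v\in(0,1)$. Writing $C=a\Gamma(c+1)/\{\Gamma(1-\sigma)\Gamma(c+\sigma)\}$ for the prefactor in~\eqref{eq:funcM_BP}, so that $N(v)=C\int_v^1 u^{-\sigma-1}(1-u)^{c+\sigma-1}\,\ddr u$, one checks that $\alpha C=\sigma$ and inverts the two claimed envelopes: for $\sigma\in(0,1)$ the claim becomes $g(v):=v^{-\sigma}-\alpha N(v)\ge\beta$ on $(0,1)$, while for $\sigma=0$ it becomes $N(v)\le a(1-c\log v)$ on $(0,1)$. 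I would treat these two reductions in parallel.

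For $\sigma\in(0,1)$, differentiating under the integral gives $g'(v)=\sigma v^{-\sigma-1}\{(1-v)^{c+\sigma-1}-1\}$, so the sign of $g'$ is governed entirely by whether $c+\sigma$ exceeds $1$. When $c+\sigma\ge1$ the bracket is nonpositive, $g$ is nonincreasing, and hence $g(v)\ge\lim_{v\to1^-}g(v)=1\ge\beta$ since $\beta\le1$; this disposes of the easy regime with room to spare. When $c+\sigma<1$, $g$ is nondecreasing, so its infimum over $(0,1)$ is the endpoint value $g_0:=\lim_{v\to0^+}g(v)$, and the whole statement reduces to $g_0\ge\beta$. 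Using the elementary identity $\sigma\int_v^1 u^{-\sigma-1}\,\ddr u=v^{-\sigma}-1$, I would first rewrite $g(v)=1-\sigma\int_v^1 u^{-\sigma-1}\{(1-u)^{c+\sigma-1}-1\}\,\ddr u$, which makes the limit $v\to0^+$ legitimate because the integrand is $O(u^{-\sigma})$ near the origin.

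The main obstacle is the evaluation of this limiting integral, since the unregularized Beta integral $\int_0^1 u^{-\sigma-1}(1-u)^{c+\sigma-1}\,\ddr u$ diverges at $u=0$. I would handle it by analytic continuation: the identity $\int_0^1 u^{s-1}\{(1-u)^{b-1}-1\}\,\ddr u=B(s,b)-1/s$, valid for $\mathrm{Re}\,s>0$, continues to $s\in(-1,0)$, and at $s=-\sigma$, $b=c+\sigma$ it yields $g_0=\Gamma(1-\sigma)\Gamma(c+\sigma)/\Gamma(c)$. It then remains to prove the special-function inequality $g_0\ge\beta$, equivalently $\Gamma(1-\sigma)\{\Gamma(c+\sigma)/\Gamma(c)+\sigma/(c+\sigma)\}\ge1$. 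For $c\ge0$ this follows cleanly from $\Gamma(c+\sigma)/\Gamma(c)=c\,B(c+\sigma,1-\sigma)/\Gamma(1-\sigma)$ together with the two elementary facts $B(c+\sigma,1-\sigma)=\int_0^1 t^{c+\sigma-1}(1-t)^{-\sigma}\,\ddr t\ge 1/(c+\sigma)$ and $\Gamma(1-\sigma)\ge1$ on $(0,1)$, which give $c\,B(c+\sigma,1-\sigma)+\sigma\Gamma(1-\sigma)/(c+\sigma)\ge(c+\sigma)/(c+\sigma)=1$. The delicate sub-case is $c\in(-\sigma,0)$, where $\Gamma(c)<0$ reverses the direction of the Beta-function bound; there I would instead estimate the convergent integral $g_0=1-\sigma\int_0^1 u^{-\sigma-1}\{(1-u)^{c+\sigma-1}-1\}\,\ddr u$ directly, and I expect this bookkeeping, together with the rigorous justification of the analytic continuation and the interchange of limit and integration, to be the technically heaviest part of the argument.

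Finally, the case $\sigma=0$ runs in exact analogy. Here $N(v)=ac\int_v^1 u^{-1}(1-u)^{c-1}\,\ddr u$, the target $h(v):=a(1-c\log v)-N(v)$ satisfies $h'(v)=(ac/v)\{(1-v)^{c-1}-1\}$, and the sign again switches at $c=1$. For $c\ge1$ monotonicity gives $h\ge0$ immediately; for $c<1$ the infimum sits at $v\to0^+$, where the same subtract-the-singularity step yields $h_0/a=1-c\int_0^1 u^{-1}\{(1-u)^{c-1}-1\}\,\ddr u=c\{\psi(c+1)+\gamma_{\mathrm E}\}$, which is nonnegative because $\psi(c+1)+\gamma_{\mathrm E}=\int_0^1(1-t^{c})/(1-t)\,\ddr t\ge0$. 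As above, the only nonroutine steps are the identification of the limiting integral (here via the digamma representation) and the limit/integral interchange at $v\to0^+$, both of which I would settle by dominated convergence once the singular part has been removed.
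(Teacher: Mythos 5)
Your reduction of the lemma to the pointwise inequalities $g(v):=v^{-\sigma}-\alpha N(v)\ge\beta$ (for $\sigma\in(0,1)$) and $N(v)\le a(1-c\log v)$ (for $\sigma=0$), the monotonicity analysis through the sign of $(1-v)^{c+\sigma-1}-1$, and the evaluation $g_0=\Gamma(1-\sigma)\Gamma(c+\sigma)/\Gamma(c)$ by analytic continuation are all correct, and the $\sigma=0$ case is complete (there the constraint $c>-\sigma$ forces $c>0$). But for $\sigma\in(0,1)$ the proof genuinely stops short: the stable-beta process is defined for every $c>-\sigma$, and the sub-case $c\in(-\sigma,0)$, which you explicitly defer to an unspecified ``direct estimate,'' is not routine bookkeeping. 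There $\Gamma(c)<0$, so both $g_0$ and $\beta$ are negative and both diverge to $-\infty$ as $c\downarrow-\sigma$, with limiting ratio $g_0/\beta\to1/\Gamma(1-\sigma)$; the required inequality $g_0\ge\beta$ therefore survives only because $\Gamma(1-\sigma)\ge1$ --- it is a competition of rates, not something a crude estimate will give. Concretely, what is missing is the inequality $c\,B(c+\sigma,1-\sigma)+\sigma\Gamma(1-\sigma)/(c+\sigma)\ge1$ for $c\in(-\sigma,0)$, which (because $c<0$) now requires an \emph{upper} bound on $B(c+\sigma,1-\sigma)$, exactly the reversal you flag; no such bound is supplied. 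Numerically the inequality does appear to hold on all of $(-\sigma,0)$, so your route is salvageable, but as written the lemma is proved only on part of its stated parameter range.

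It is also worth contrasting your route with the paper's, which involves no case analysis at all. The paper bounds the integrand of $N$ pointwise --- for $\sigma=0$ via $u^{-1}(1-u)^{c-1}\le u^{-1}+(1-u)^{c-1}$, an immediate consequence of $1/\{u(1-u)\}=1/u+1/(1-u)$ together with $(1-u)^{c}\le1$, and for $\sigma\in(0,1)$ via the analogous bound $u^{-1-\sigma}(1-u)^{\sigma+c-1}\le\Gamma(1-\sigma)u^{-\sigma-1}+(1-u)^{\sigma+c-1}$ --- then integrates term by term and inverts the resulting explicit bound on $N(v)$. This yields the stated constants in two lines, uniformly in $c>-\sigma$, with no analytic continuation, no digamma identities, and no endpoint limits. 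What your heavier approach buys, where it is complete, is sharpness: you identify the exact infimum of $v^{-\sigma}-\alpha N(v)$, namely $1$ when $c+\sigma\ge1$ and $\Gamma(1-\sigma)\Gamma(c+\sigma)/\Gamma(c)$ when $c+\sigma<1$, so you in fact establish the bound with $\beta$ replaced by a larger, hence better, constant. To stand as a proof of the lemma as stated, however, you must close the $c\in(-\sigma,0)$ gap.
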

\begin{proof}
For  $\sigma=0$, from $u^{-1}(1-u)^{c-1}\leq u^{-1}+(1-u)^{c-1}$ one obtains $\int_{v}^{1} u^{-1}(1-u)^{c-1}\ddr u\leq 1/c -\log v$. Hence, $N(v)/a\leq 1-c\log v$ and $N^{-1}(\xi)\leq \edr^{(1-\xi/a)/c}$. The argument for $\sigma\neq 0$ follows along the same lines starting from  $u^{-1-\sigma}(1-u)^{\sigma+c-1}\leq \Gamma(1-\sigma)u^{-\sigma-1}+(1-u)^{\sigma+c-1}$.
\end{proof}
\begin{prop}\label{prop:proba}
Let $(\xi_j)_{j\geq 1}$ be the jump times for a homogeneous Poisson process on $\R^+$ with unit intensity. Define the tail sum of the \SBP as
\begin{equation*}
T_M = \sum_{j=M+1}^\infty N^{-1}(\xi_j),
\end{equation*}
where  $N(\,\cdot\,)$ is given by~\eqref{eq:funcM_BP}. Then for any $\epsilon\in(0,1)$,
\begin{equation*}
\Proba\Big(T_M\leq t_M^\epsilon\Big) \geq 1-\epsilon, \text{ for }
t_M^\epsilon = \left\{
\begin{matrix}
\frac{C_1}{\epsilon}\edr^{\frac{1}{c}-\frac{\epsilon M}{C_1}} \qquad \qquad \quad \text{if } \sigma=0,\\
\frac{\sigma}{1-\sigma}\frac{(C_2\epsilon)^{1/\sigma}}{(M+\beta C_2/\epsilon)^{1/\sigma-1}}  \quad \text{if } \sigma\in(0,1),
\end{matrix}
\right.
\end{equation*}
where $C_1=2ac\edr$ and $C_2=2\edr/\alpha$ do not depend on $\epsilon$.
\end{prop}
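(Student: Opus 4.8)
The plan is to bound the random tail sum $T_M$ by a deterministic, monotone envelope via Lemma~\ref{lem:N}, and then to control that envelope by pairing a Chernoff bound on an early jump time with a conditional Markov inequality that exploits the restriction property of the Poisson process. Concretely, Lemma~\ref{lem:N} lets me replace each $N^{-1}(\xi_j)$ by the explicit decreasing function $g$, with $g(\xi)=\edr^{(1-\xi/a)/c}=\edr^{1/c}\edr^{-\xi/(ac)}$ when $\sigma=0$ and $g(\xi)=(\alpha\xi+\beta)^{-1/\sigma}$ when $\sigma\in(0,1)$. Since the $\xi_j$ are increasing and $g$ is decreasing, this yields the pathwise domination $T_M\le S_M$ with $S_M\coloneqq\sum_{j>M}g(\xi_j)$, so it suffices to exhibit $t_M^\epsilon$ with $\Proba(S_M>t_M^\epsilon)\le\epsilon$.

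Next I split the event according to the size of the $M$-th jump time, allocating half of the budget $\epsilon$ to each piece. For a threshold $s_0$ to be fixed later,
\[
\Proba(S_M>t)\le\Proba(\xi_M<s_0)+\Proba(S_M>t,\ \xi_M\ge s_0).
\]
For the first term I use that $\xi_M\sim\mathrm{Gamma}(M,1)$ and the Poisson representation $\Proba(\xi_M<s_0)=\Proba(\Pi\ge M)$, where $\Pi$ is Poisson with mean $s_0$, together with the Chernoff bound $\Proba(\Pi\ge M)\le\edr^{-s_0}(\edr s_0/M)^M$ (valid since $s_0<M$). Choosing $s_0=\epsilon M/(2\edr)$ makes $\edr s_0/M=\epsilon/2$, so this term is at most $(\epsilon/2)^M\edr^{-\epsilon M/(2\edr)}\le\epsilon/2$ for $\epsilon\in(0,1)$ and $M\ge1$; this is the source of the factors $\edr$ and $2$ appearing in $C_1$ and $C_2$.

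For the second term I condition on $\xi_M=s$: by the strong Markov property of the Poisson process, the points $\xi_{M+1},\xi_{M+2},\dots$ then form a unit-rate Poisson process on $(s,\infty)$, so Campbell's theorem gives $\E[S_M\mid\xi_M=s]=\int_s^\infty g(x)\,\d x$, which is finite (by exponential decay when $\sigma=0$, and because $1/\sigma>1$ when $\sigma\in(0,1)$) and decreasing in $s$. A conditional Markov inequality, made uniform over $\{\xi_M\ge s_0\}$ by this monotonicity, then gives $\Proba(S_M>t,\ \xi_M\ge s_0)\le t^{-1}\int_{s_0}^\infty g$. Setting this equal to $\epsilon/2$ fixes $t=\tfrac{2}{\epsilon}\int_{s_0}^\infty g$, and the two elementary integrals $\int_{s_0}^\infty g=ac\,\edr^{1/c}\edr^{-s_0/(ac)}$ and $\int_{s_0}^\infty g=\tfrac{\sigma}{\alpha(1-\sigma)}(\alpha s_0+\beta)^{-(1-\sigma)/\sigma}$, evaluated at $s_0=\epsilon M/(2\edr)$, produce a bound of the stated form with $C_1=2ac\edr$ and $C_2=2\edr/\alpha$.

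The main obstacle is the second step: one must justify the conditional-mean identity through the restriction/strong Markov property of the Poisson process and then invoke Markov's inequality uniformly on $\{\xi_M\ge s_0\}$, rather than applying Markov to the unconditional mean $\E[S_M]$, which alone would not localise the decay at the chosen scale $s_0$. Coordinating the two halves of $\epsilon$ so that a single threshold $s_0$ both feeds the Chernoff bound and controls the integral is what makes the constants collapse to the clean $C_1,C_2$; the remaining work is purely the two closed-form integrations and the substitution of $s_0$.
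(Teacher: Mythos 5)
Your proof is correct, and it takes a genuinely different route from the paper's. The paper argues as in Theorem A.1 of \citet{brix1999generalized}: it sets $q_j$ equal to the $\epsilon 2^{M-j}$-quantile of the $\mathrm{Gamma}(j,1)$ law of $\xi_j$, so that a union bound gives $\xi_j\ge q_j$ simultaneously for all $j>M$ with probability at least $1-\epsilon$, hence $T_M\le\sum_{j>M}N^{-1}(q_j)$ on that event; it then applies Lemma~\ref{lem:N} together with the quantile lower bound $q_j\ge\epsilon j/(2\edr)$ and controls the resulting deterministic series by a geometric sum ($\sigma=0$) or an integral comparison ($\sigma\in(0,1)$). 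You instead control only the single time $\xi_M$ via the Poisson Chernoff bound at $s_0=\epsilon M/(2\edr)$, and on $\{\xi_M\ge s_0\}$ you invoke the strong Markov property and Campbell's formula to get $\E[S_M\mid\xi_M=s]=\int_s^\infty g(x)\,\ddr x$, closing with a conditional Markov inequality made uniform by monotonicity in $s$; each piece spends $\epsilon/2$. The two arguments share the same analytic kernel---the paper's inequality $q_j\ge \epsilon j/(2\edr)$ is itself a consequence of the very same Poisson Chernoff bound---but your decomposition replaces the infinite union bound by a single split plus a first-moment estimate. This is leaner and in fact slightly stronger: your thresholds come out as the stated ones divided by $\edr$ (prefactor $C_1/(\edr\epsilon)$ for $\sigma=0$, and $\edr^{-1}(C_2/\epsilon)^{1/\sigma}$ in place of $(C_2/\epsilon)^{1/\sigma}$ for $\sigma\in(0,1)$), and since a smaller threshold at the same confidence level implies the stated bound a fortiori, the proposition follows. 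What the paper's quantile route buys in exchange is the intermediate object $\tilde t_M^\epsilon=\sum_{j>M}N^{-1}(q_j)$, which it later evaluates numerically (computing the $q_j$ exactly rather than bounding them below) to exhibit a much sharper bound in its appendix figure; your argument does not produce that object. Finally, note that your derivation yields $(C_2/\epsilon)^{1/\sigma}$ in the numerator, consistent with the paper's own proof and with the requirement that the threshold increase as $\epsilon$ decreases; the expression $(C_2\epsilon)^{1/\sigma}$ printed in the proposition is evidently a typo, so your version is the correct reading of the statement.
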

\begin{proof}
The proof follows along the same lines as the proof of Theorem A.1. in \citet{brix1999generalized}. Let $q_j$ denote the $\epsilon2^{M-j}$ quantile, for $j=M+1,M+2,\ldots$, of a gamma distribution with mean and variance equal to $j$. Then
\begin{equation*}
\Proba\bigg(\sum_{j=M+1}^\infty N^{-1}(\xi_j)\leq \sum_{j=M+1}^\infty N^{-1}(q_j)\bigg)\geq 1-\epsilon.
\end{equation*}

An upper bound on $\tilde t_M^\epsilon=\sum_{j=M+1}^\infty N^{-1}(q_j)$ is then found by resorting to Lemma~\ref{lem:N} along with the inequality $q_j\geq \frac{\epsilon}{2\edr}j$. If $\sigma=0$
\begin{equation*}
\tilde t_M^\epsilon \leq \edr^{1/c} \sum_{j=M+1}^\infty \edr^{-\frac{q_j}{ac}} \leq\edr^{1/c} \sum_{j=M+1}^\infty \edr^{-\frac{\epsilon j}{2ac\edr }}\leq\edr^{1/c} \frac{2ac\edr }{\epsilon}\edr^{-\frac{\epsilon M}{2ac\edr }},
\end{equation*}
whereas if $\sigma\neq 0$
\begin{equation*}
\tilde t_M^\epsilon\leq\sum_{j=M+1}^\infty (\alpha q_j+\beta)^{-\frac{1}{\sigma}} \leq\sum_{j=M+1}^\infty \left(\frac{\alpha \epsilon j}{2\edr}+\beta\right)^{-\frac{1}{\sigma}} =
\left(\frac{2\edr}{\alpha\epsilon}\right)^{-\frac{1}{\sigma}}\sum_{j=M+1}^\infty \left(j+\frac{2\edr\beta}{\alpha\epsilon}\right)^{-\frac{1}{\sigma}}.
\end{equation*}

The result follows by bounding the last sum by $\int_{M}^{\infty}\left(x+\frac{2\edr\beta}{\alpha\epsilon}\right)^{-\frac{1}{\sigma}}\ddr x$.
\end{proof}

The bound $t_M^\epsilon$ obtained in Proposition~\ref{prop:proba} is exponential when $\sigma=0$ and polynomial when $\sigma\neq 0$, but it is very conservative as already pointed out by \citet{brix1999generalized}. This finding is further highlighted in the table associated to Figure~\ref{fig:proba}, where the bound $t_M^\epsilon$ is computed with appropriate constants derived from the proof. In contrast, the bound $\tilde t_M^\epsilon$ obtained by direct calculation of the quantiles $q_j$ (instead of resorting to a lower bound on them) is much sharper. Figure~\ref{fig:proba} displays the sharper bound $\tilde t_M^\epsilon$. Inspection of the plot demonstrates a decrease pattern in this bound in probability which is reminiscent of the ones for the indices $\ell_M$ and $e_M$ studied in the paper. This observation is a further indication that the \FKa is a tool with well-behaved approximation error.

\begin{center}
\begin{figure}[ht!]
\begin{minipage}{7cm}
\includegraphics[width=\linewidth]{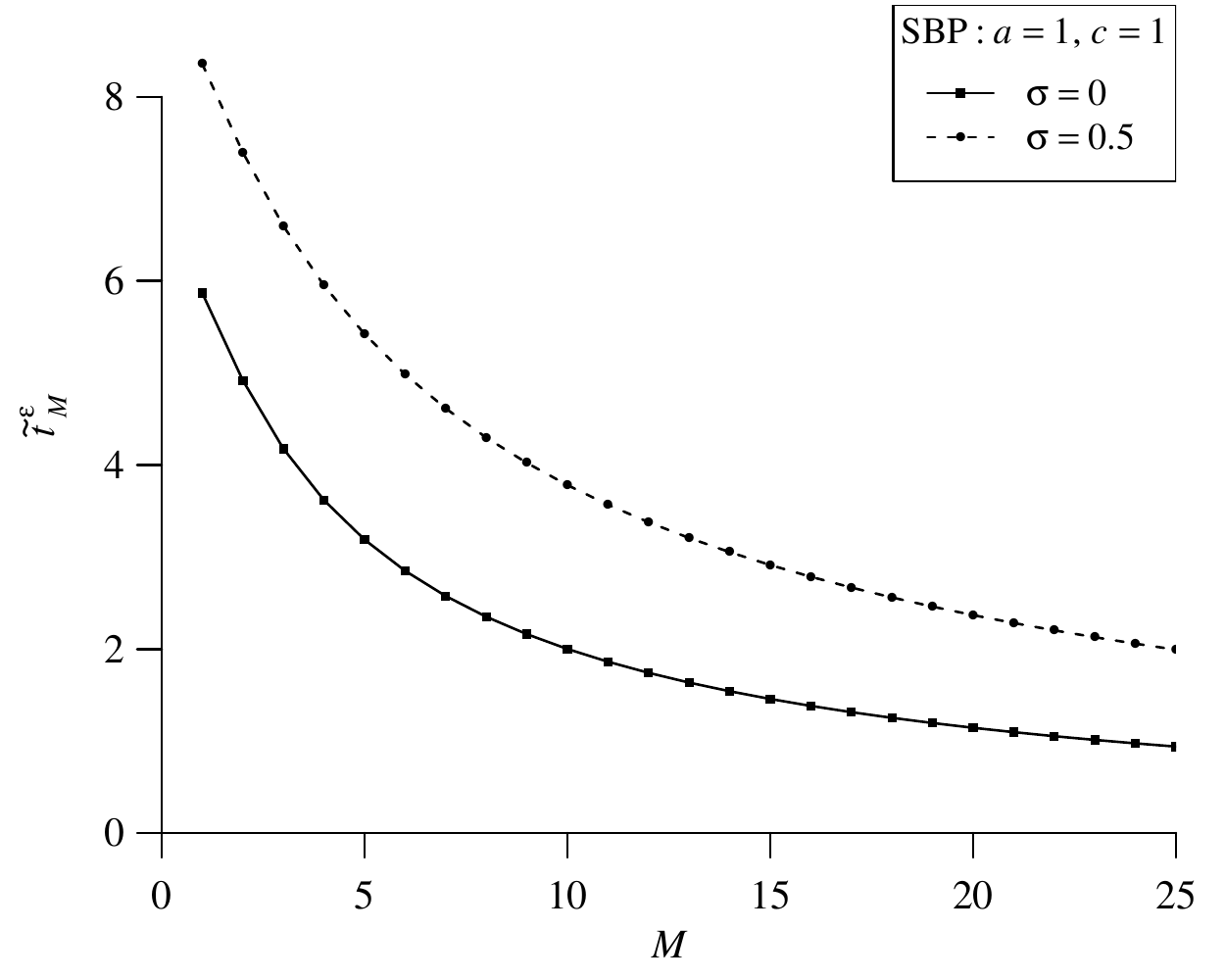}
\end{minipage}
\hskip1cm
\begin{minipage}{5cm}
\begin{tabular}{ccccc}
\toprule
& $M$ & 25 & 100 & 500 \\\hline
\multirow{2}{*}{$t_M^\epsilon$} & $\sigma=0$ & 1411 & 1230 & 589 \\
\cline{2-5}
& $\sigma=0.5$ &  1554 & 1250 & 612 \\\hline
\multirow{2}{*}{$\tilde t_M^\epsilon$} & $\sigma=0$ &   0.942 & 0.230 & 0.003 \\
\cline{2-5}
& $\sigma=0.5$ &   1.998 & 0.534 & 0.008  \\\bottomrule
\end{tabular}
\end{minipage}
\caption{Stable-beta process with parameters $\sigma=0$ and $\sigma=0.5$. Left: Bound in probability $\tilde t_M^\epsilon$ of the tail sum $T_M$ obtained by direct calculation of the quantiles $q_j$ with $\epsilon=10^{-2}$ as the truncation level $M$ increases.
Right:  Bounds $t_M^\epsilon$ (provided in Proposition~\ref{prop:proba}) and $\tilde t_M^\epsilon$ (obtained by direct calculation of the quantiles $q_j$) of the tail sum after $M$ jumps with $\epsilon=10^{-2}$.}
\label{fig:proba}
\end{figure}
\end{center}

\bibliographystyle{apalike}
\bibliography{biblio}

\begin{thebibliography}{}

\bibitem[Argiento et~al., 2015]{argiento2015priori}
Argiento, R., Bianchini, I., and Guglielmi, A. (2015).
\newblock A priori truncation method for posterior sampling from homogeneous
  normalized completely random measure mixture models.
\newblock {\em arXiv preprint arXiv:1507.04528}.

\bibitem[Argiento et~al., 2016]{argiento2015blocked}
Argiento, R., Bianchini, I., and Guglielmi, A. (2016).
\newblock {A blocked Gibbs sampler for NGG-mixture models via a priori
  truncation}.
\newblock {\em Stat. Comput.}, 26(3):641--661.

\bibitem[Barrios et~al., 2013]{barrios2013modeling}
Barrios, E., Lijoi, A., Nieto-Barajas, L.~E., and Pr{\"u}nster, I. (2013).
\newblock Modeling with normalized random measure mixture models.
\newblock {\em Statist. Sci.}, 28(3):313--334.

\bibitem[Brix, 1999]{brix1999generalized}
Brix, A. (1999).
\newblock Generalized gamma measures and shot-noise {Cox} processes.
\newblock {\em Adv. Appl. Prob.}, 31:929--953.

\bibitem[Burden and Faires, 1993]{burden1993numerical}
Burden, R. and Faires, J. (1993).
\newblock {\em Numerical Analysis}.
\newblock PWS Publishing Company, Boston.

\bibitem[Campbell et~al., 2015]{brod}
Campbell, T., Huggins, J., Broderick, T., and How, J. (2015).
\newblock Truncated completely random measures.
\newblock In {\em Bayesian Nonparametrics: The Next Generation (NIPS
  workshop)}.

\bibitem[Cont and Tankov, 2008]{tankov2008financial}
Cont, R. and Tankov, P. (2008).
\newblock {\em Financial modelling with jump processes}.
\newblock Chapman \& Hall / CRC Press, London.

\bibitem[Daley and Vere-Jones, 2008]{daley2008introduction}
Daley, D.~J. and Vere-Jones, D. (2008).
\newblock {An introduction to the theory of point processes. Vol. II. General
  theory and structure. Probability and its Applications}.

\bibitem[De~Blasi et~al., 2010]{DeBlasi2010class}
De~Blasi, P., Favaro, S., and Muliere, P. (2010).
\newblock A class of neutral to the right priors induced by superposition of
  beta processes.
\newblock {\em J. Stat. Plan. Inference}, 140(6):1563--1575.

\bibitem[De~Blasi et~al., 2012]{de2012asymptotic}
De~Blasi, P., Lijoi, A., and Pr{\"u}nster, I. (2012).
\newblock An asymptotic analysis of a class of discrete nonparametric priors.
\newblock {\em Stat. Sin.}, 23:1299--1322.

\bibitem[Doshi et~al., 2009]{doshi2009variational}
Doshi, F., Miller, K., Gael, J.~V., and Teh, Y.~W. (2009).
\newblock {Variational inference for the Indian buffet process}.
\newblock In {\em International Conference on Artificial Intelligence and
  Statistics}, pages 137--144.

\bibitem[Epifani et~al., 2003]{epifani2003exponential}
Epifani, I., Lijoi, A., and Pr{\"u}nster, I. (2003).
\newblock Exponential functionals and means of neutral-to-the-right priors.
\newblock {\em Biometrika}, 90(4):791--808.

\bibitem[Ferguson, 1973]{ferguson1973bayesian}
Ferguson, T.~S. (1973).
\newblock {A Bayesian analysis of some nonparametric problems}.
\newblock {\em Ann. Statist.}, 1(2):209--230.

\bibitem[Ferguson, 1974]{Ferguson1974}
Ferguson, T.~S. (1974).
\newblock Prior distributions on spaces of probability measures.
\newblock {\em Ann. Statist.}, 2(4):615--629.

\bibitem[Ferguson and Klass, 1972]{ferguson1972representation}
Ferguson, T.~S. and Klass, M.~J. (1972).
\newblock {A representation of independent increment processes without Gaussian
  components}.
\newblock {\em Ann. Math. Stat.}, 43(5):1634--1643.

\bibitem[Ghahramani and Griffiths, 2005]{ghahramani2005infinite}
Ghahramani, Z. and Griffiths, T.~L. (2005).
\newblock {Infinite latent feature models and the Indian buffet process}.
\newblock In {\em Adv. Neur. In.}, pages 475--482.

\bibitem[Griffin, 2016]{griffin2016adaptive}
Griffin, J.~E. (2016).
\newblock An adaptive truncation method for inference in bayesian nonparametric
  models.
\newblock {\em Stat. Comput.}, 26(1-2):423--441.

\bibitem[Griffin and Walker, 2011]{Griffin2011posterior}
Griffin, J.~E. and Walker, S.~G. (2011).
\newblock Posterior simulation of normalized random measure mixtures.
\newblock {\em J. Comput. Graph. Stat.}, 20(1):241--259.

\bibitem[Hjort, 1990]{hjort1990nonparametric}
Hjort, N.~L. (1990).
\newblock Nonparametric bayes estimators based on beta processes in models for
  life history data.
\newblock {\em Ann. Statist.}, 18(3):1259--1294.

\bibitem[Ishwaran and James, 2001]{ishwaran2001gibbs}
Ishwaran, H. and James, L. (2001).
\newblock {Gibbs sampling methods for stick-breaking priors}.
\newblock {\em J. Amer. Statist. Assoc.}, 96(453):161--173.

\bibitem[James et~al., 2006]{james2006conjugacy}
James, L.~F., Lijoi, A., and Pr{\"u}nster, I. (2006).
\newblock {Conjugacy as a distinctive feature of the Dirichlet process}.
\newblock {\em Scand. J. Statist.}, 33(1):105--120.

\bibitem[James et~al., 2009]{james2009posterior}
James, L.~F., Lijoi, A., and Pr{\"u}nster, I. (2009).
\newblock Posterior analysis for normalized random measures with independent
  increments.
\newblock {\em Scand. J. Statist.}, 36(1):76--97.

\bibitem[Jordan, 2010]{Jordan2010hierarchical}
Jordan, M.~I. (2010).
\newblock Hierarchical models, nested models and completely random measures.
\newblock {\em Frontiers of Statistical Decision Making and Bayesian Analysis:
  in Honor of James O. Berger. New York: Springer}, pages 207--218.

\bibitem[Lijoi et~al., 2007]{lijoi2007controlling}
Lijoi, A., Mena, R.~H., and Pr{\"u}nster, I. (2007).
\newblock {Controlling the reinforcement in Bayesian non-parametric mixture
  models}.
\newblock {\em J. Roy. Stat. Soc. B Met.}, 69(4):715--740.

\bibitem[Lijoi and Pr{\"u}nster, 2010]{lijoi2010models}
Lijoi, A. and Pr{\"u}nster, I. (2010).
\newblock {Models beyond the Dirichlet process}.
\newblock In Hjort, N.~L., Holmes, C.~C., M\"uller, P., and Walker, S.~G.,
  editors, {\em Bayesian nonparametrics}, pages 80--136. Cambridge University
  Press, Cambridge.

\bibitem[Nieto-Barajas, 2014]{Nieto-Barajas2014bayesian}
Nieto-Barajas, L.~E. (2014).
\newblock Bayesian semiparametric analysis of short- and long-term hazard
  ratios with covariates.
\newblock {\em Comput. Stat. Data Anal.}, 71(0):477--490.

\bibitem[Nieto-Barajas and Pr\"unster, 2009]{nieto2009sensitivity}
Nieto-Barajas, L.~E. and Pr\"unster, I. (2009).
\newblock {A sensitivity analysis for Bayesian nonparametric density
  estimators}.
\newblock {\em Stat. Sin.}, 19(2):685.

\bibitem[Nieto-Barajas et~al., 2004]{nieto2004normalized}
Nieto-Barajas, L.~E., Pr{\"u}nster, I., and Walker, S.~G. (2004).
\newblock Normalized random measures driven by increasing additive processes.
\newblock {\em Ann. Statist.}, 32(6):2343--2360.

\bibitem[Nieto-Barajas and Walker, 2002]{Nieto-Barajas2002markov}
Nieto-Barajas, L.~E. and Walker, S.~G. (2002).
\newblock {Markov Beta and Gamma Processes for Modelling Hazard Rates}.
\newblock {\em Scand. J. Statist.}, 29(3):413--424.

\bibitem[Nieto-Barajas and Walker, 2004]{Nieto-Barajas2004bayesian}
Nieto-Barajas, L.~E. and Walker, S.~G. (2004).
\newblock {Bayesian nonparametric survival analysis via L\'{e}vy driven Markov
  processes}.
\newblock {\em Stat. Sin.}, 14(4):1127--1146.

\bibitem[Orbanz and Williamson, 2012]{orbanz2011unit}
Orbanz, P. and Williamson, S. (2012).
\newblock Unit-rate poisson representations of completely random measures.
\newblock {\em Tech. report}.

\bibitem[Paisley et~al., 2012]{paisley2012stick}
Paisley, J.~W., Blei, D.~M., and Jordan, M.~I. (2012).
\newblock {Stick-breaking beta processes and the Poisson process}.
\newblock In {\em International Conference on Artificial Intelligence and
  Statistics}, pages 850--858.

\bibitem[Regazzini et~al., 2003]{rlp2003}
Regazzini, E., Lijoi, A., and Pr{\"{u}}nster, I. (2003).
\newblock Distributional results for means of normalized random measures with
  independent increments.
\newblock {\em Ann. Statist.}, 31(2):560--585.

\bibitem[Rosi{\'n}ski, 2001]{rosinski2001series}
Rosi{\'n}ski, J. (2001).
\newblock {Series representations of L{\'e}vy processes from the perspective of
  point processes}.
\newblock In {\em L{\'e}vy processes}, pages 401--415. Springer.

\bibitem[Teh and G\"or\"ur, 2009]{teh2009indian}
Teh, Y.~W. and G\"or\"ur, D. (2009).
\newblock Indian buffet processes with power-law behavior.
\newblock In {\em Adv. Neur. In.}, pages 1838--1846.

\bibitem[Thibaux and Jordan, 2007]{thibaux2007hierarchical}
Thibaux, R. and Jordan, M.~I. (2007).
\newblock {Hierarchical beta processes and the Indian buffet process}.
\newblock In {\em International Conference on Artificial Intelligence and
  Statistics}, pages 564--571.

\bibitem[Walker and Damien, 2000]{Walker2000representations}
Walker, S.~G. and Damien, P. (2000).
\newblock {Miscellanea. Representations of L\'{e}vy processes without Gaussian
  components}.
\newblock {\em Biometrika}, 87(2):477--483.

\end{thebibliography}

\end{document}